\title{Differentially Private Gomory-Hu Trees}
\author{
Anders Aamand\thanks{BARC, University of Copenhagen. \texttt{aamand@mit.edu}. Supported by VILLUM Foundation grant 16582 and DFF-International Postdoc Grant 0164-00022B from the Independent Research Fund Denmark.} \and 
Justin Y. Chen\thanks{Massachusetts Institute of Technology. \texttt{justc@mit.edu}. Supported by an NSF
Graduate Research Fellowship under Grant No.\ 17453. Part of this work was conducted while the author was visiting the Simons Institute for the Theory of Computing.} \and 
Mina Dalirrooyfard\thanks{Morgan Stanley. \texttt{minad@mit.edu}.} \and 
Slobodan Mitrovi{\'c}\thanks{UC Davis. \texttt{smitrovic@ucdavis.edu}. Supported by the Google Research Scholar and NSF Faculty Early Career Development Program No.~2340048. Part of this work was conducted while the author was visiting the Simons Institute for the Theory of Computing.} \and 
Yuriy Nevmyvaka\thanks{Morgan Stanley. \texttt{yuriy.nevmyvaka@morganstanley.com}.} \and 
Sandeep Silwal\thanks{UW-Madison. \texttt{silwal@cs.wisc.edu}.} \and 
Yinzhan Xu\thanks{Massachusetts Institute of Technology. \texttt{xyzhan@mit.edu}. Supported by NSF Grant CCF-2330048 and a Simons Investigator Award. }
}
\date{}
\begin{document}

\maketitle

\begin{abstract}
Given an undirected, weighted $n$-vertex graph $G = (V, E, w)$, a Gomory-Hu tree $T$ is a weighted tree on $V$ such that for any pair of distinct vertices $s, t \in V$, the Min-$s$-$t$-Cut on $T$ is also a Min-$s$-$t$-Cut on $G$. Computing a Gomory-Hu tree is a well-studied problem in graph algorithms and has received considerable attention. In particular, a long line of work recently culminated in constructing a Gomory-Hu tree in almost linear time [Abboud, Li, Panigrahi and Saranurak, FOCS 2023]. 

We design a differentially private (DP) algorithm that computes an approximate Gomory-Hu tree. Our algorithm is $\varepsilon$-DP, runs in polynomial time, and can be used to compute $s$-$t$ cuts that are $\tilde{O}(n/\varepsilon)$-additive approximations of the Min-$s$-$t$-Cuts in $G$ for all distinct $s, t \in V$ with high probability. Our error bound is essentially optimal, as [Dalirrooyfard, Mitrović and Nevmyvaka, NeurIPS 2023] showed that privately outputting a single Min-$s$-$t$-Cut requires $\Omega(n)$ additive error even with $(1, 0.1)$-DP and allowing for a multiplicative error term. Prior to our work, the best additive error bounds for approximate all-pairs Min-$s$-$t$-Cuts were $O(n^{3/2}/\varepsilon)$ for $\varepsilon$-DP [Gupta, Roth and Ullman, TCC 2012] and $O(\sqrt{mn} \cdot \text{polylog}(n/\delta) / \varepsilon)$  for $(\varepsilon, \delta)$-DP [Liu, Upadhyay and Zou, SODA 2024], both of which are implied by differential private algorithms that preserve all cuts in the graph. An important technical ingredient of our main result is an $\varepsilon$-DP algorithm for computing minimum Isolating Cuts with $\tilde{O}(n / \varepsilon)$ additive error, which may be of independent interest.
\end{abstract}
\setcounter{page}{0}
\thispagestyle{empty}

\newpage
\tableofcontents
\setcounter{page}{0}
\thispagestyle{empty}

\newpage

\section{Introduction}
Given an undirected, weighted graph $G = (V, E, w)$ with positive edge weights, a cut is a bipartition of vertices $(U, V \setminus U)$, and the value of the cut is the total weights of edges crossing the bipartition. Given a pair of distinct vertices $s, t \in V$, the \minstcut~is a minimum-valued cut where $s \in U$ and $t \in V \setminus U$. 
\minstcut{} is dual to the \maxstflow{} problem, and the celebrated max-flow min-cut theorem states that the value of the \minstcut~equals the value of the \maxstflow{}~\cite{ford1956maximal,elias1956note}. Finding a \minstcut{} (or equivalently \maxstflow{}) is a fundamental problem in algorithmic graph theory, which has been studied for over seven decades~\cite{dantzig1951application, harris1955fundamentals, ford1956maximal,elias1956note}, and has inspired ample algorithmic research~(e.g., see a survey in \cite[Appendix A]{DBLP:conf/focs/ChenKLPGS22}). It also has a wide range of algorithmic applications, including edge connectivity \cite{menger1927allgemeinen}, bipartite matching (see, e.g., \cite{CLRS}), minimum Steiner cut \cite{DBLP:conf/focs/LiP20}, vertex-connectivity oracles \cite{DBLP:conf/stoc/PettieSY22}, among many others.

A natural and well-studied variant of \minstcut is the \emph{All-Pairs Min Cut (APMC)} problem.
Given an input graph, the goal is to output the \minstcut~for all the pairs of vertices $s$ and $t$ in $V$. In a seminal paper, Gomory and Hu \cite{gomory1961multi} showed that there is a tree representation for this problem, called GH-tree or cut tree, that takes only $n-1$ \minstcut~(Max Flow) calls to compute. Consequently, there are only $n-1$ different max flow / minimum cut values in an arbitrary graph with positive edge weights.
There has been a long line of research in designing faster GH-tree algorithms (e.g. \cite{hariharan2007mn,borradaile2015min,abboud2020cut,zhang2021gomory,li2021approximate,DBLP:conf/stoc/AbboudKT21,abboud2022apmf, abboud2023all}, also see the survey \cite{panigrahi2016gomory}), culminating an almost linear time algorithm for computing the GH-tree \cite{abboud2023all}. Beyond theoretical considerations, the GH-Tree has a long list of applications in a variety of research areas, such as networks \cite{hu1974optimum}, image processing~\cite{wu1993optimal}, optimization such as the $b$-matching problem \cite{padberg1982odd} and webpage segmenting~\cite{liu2011segmenting}. Furthermore, the global Min Cut of a graph is easily obtainable from the GH-tree. 

In practice, algorithms are often applied to large data sets containing sensitive information. 
It is well understood by now that even minor negligence in handling users' data can have severe consequences on user privacy; see \cite{backstrom2007wherefore,narayanan2008robust,korolova2010privacy,shokri2017membership,culnane2019stop} for a few examples.
\emph{Differential privacy (DP)}, introduced by Dwork, McSherry, Nissim, and Smith in their seminal work~\cite{DBLP:conf/tcc/DworkMNS06}, is a widely adopted standard for formalizing the privacy guarantees of algorithms. Informally, an algorithm is deemed differentially private if the outputs for two given  \emph{neighboring} inputs are statistically indistinguishable. 
The notion of neighboring inputs is application-dependent.
In particular, when the underlying input belongs to a family of graphs, arguably the most studied notion of DP is with respect to \emph{edges}, in which two neighboring graphs differ in only one edge; if the graphs are weighted, two neighboring graphs are those whose weights differ by at most one and in a single edge\footnote{Algorithms satisfying this notion are often also private for graphs whose vector of $\binom{n}{2}$ edge weights differ by $1$ in $\ell_1$ distance.}. This is the setting for all the of the private cut problems we discuss in this work.
In a related notion called vertex DP, two graphs are called neighboring if they differ in the existence of one vertex and the edges incident to that vertex. 
Two qualitative notions of DP that formalize the meaning of being ``statistically indistinguishable'' are \emph{pure} and \emph{approximate} DP. In pure DP, which is the focus of this paper, privacy is measured by a parameter $\eps$, and we use $\eps$-DP to refer to an algorithm having pure DP. In approximate DP, the privacy is expressed using two parameters, $\eps$ and $\delta$ where an individual's data can be leaked with a small probability corresponding to $\delta$. Pure DP implies approximate DP and is a qualitatively stronger privacy guarantee.
We formally define these notions in \cref{sec:preliminaries}.

Over the last two decades, the design of DP graph algorithms has gained significant attention~\cite{mcsherry2007mechanism,DBLP:conf/icdm/HayLMJ09,DBLP:conf/pods/RastogiHMS09,gupta2010differentially,DBLP:journals/pvldb/KarwaRSY11,DBLP:conf/tcc/GuptaRU12,DBLP:conf/innovations/BlockiBDS13,DBLP:conf/tcc/KasiviswanathanNRS13,blum2013learning,DBLP:conf/focs/BunNSV15,sealfon2016shortest,DBLP:conf/nips/AroraU19,DBLP:conf/nips/UllmanS19,hasidim2020adversarially,eliavs2020graphapprox,bun2021differentially,raskhodnikova2021differentially,nguyen2021differentially,beimel2022dynamic,farhadi2022differentially,dhulipala2022differential,DBLP:conf/nips/Fan0L22,DBLP:conf/soda/ChenG0MNNX23,dhulipala2023near,dinitz2023improved,eden2023triangle,whitehouse2023fully,liu2024graphapprox,jain2024counting}.
Several of those works focus on approximating graph cuts in the context of DP.
For instance, Gupta et al.~\cite{gupta2010differentially} gave an $\eps$-DP algorithm for global Min Cut with additive error $O(\log{n}/\eps)$. The authors also showed that there does not exist an $\eps$-DP algorithm for global Min Cut incurring less than $\Omega{(\log{n})}$ additive error.

Gupta, Roth, and Ullman~\cite{DBLP:conf/tcc/GuptaRU12} and, independently, Blocki Blum, Datta, and Sheffet~\cite{DBLP:conf/focs/BlockiBDS12} focused on preserving all graph cuts in a DP manner. Given a graph $G$, their algorithms output a synthetic graph $H$ such that each cut-value in $H$ is the same as the corresponding cut-value in $G$ up to an additive error of $O\rb{n^{1.5}/ \eps}$. The former result is pure DP while the latter is approximate DP.
Eli{\'a}{\v{s}}, Kapralov,  Kulkarni and Lee~\cite{eliavs2020graphapprox} improved on these results for sparse, unweighted (or small total weight) graphs, achieving error $\tO\rb{\sqrt{mn / \eps}}$\footnote{In this work, the notation $\tO(x)$ stands for $O(x \cdot \polylog x)$.} with approximate DP. 
The authors show that this error is essentially tight.
In a follow-up work, Liu, Upadhyay, and Zou \cite{liu2024graphapprox} extended these results to weighted graphs and gave an algorithm to release a synthetic graph with $\tO\rb{\sqrt{mn}/\eps}$ error. A recent paper~\cite{liu2024linear} gives an algorithm for releasing a synthetic graph with worse error $\tO(m/\eps)$ but which runs in near-linear time. 

For a different problem, Dalirrooyfard, Mitrovi{\'c} and Nevmyvaka~\cite{dalirrooyfard2024cuts} gave an $\eps$-DP algorithm for the \minstcut~problem with additive error $O(n/\eps)$ and showed an essentially matching $\Omega(n)$ lower bound, even with approximate DP and both multiplicative and additive error. 

Since the algorithm by~\cite{liu2024graphapprox} approximately preserves all the cuts, it can also be used to solve APMC in a DP manner, albeit with approximate DP and with the additive error of $\tO\rb{\sqrt{mn} / \eps}$.
Additionally, the $\Omega(n)$ lower bound for \minstcut{} shown by \cite{dalirrooyfard2024cuts} also applies to APMC, as the latter is a harder problem. 
While the additive error for computing global Min Cut~\cite{gupta2010differentially}, \minstcut{}~\cite{dalirrooyfard2024cuts}, and all-cuts \cite{eliavs2020graphapprox, liu2024graphapprox} while ensuring DP is tightly characterized up to $\polylog n$ and $1/\eps$ factors, there still remains a gap of roughly $\sqrt{m / n}$ between the best known lower and upper bound for solving DP APMC.\footnote{We remark that for the mentioned problems there still remain several interesting questions. For instance, finding a pure DP global Min Cut with $O(\log n / \eps)$ additive error in polynomial time is unknown. The pure DP algorithm presented in \cite{gupta2010differentially} runs in exponential time, and can be improved to polynomial time only at the expense of approximate DP. }
This motivates the following question:
\begin{center}
    \emph{Can we obtain tight bounds on the additive error for APMC with differential privacy?}
\end{center}

\subsection{Our Contribution}

\begin{table}[!htpb]
\centering
\small
{\renewcommand{\arraystretch}{1.7}
\begin{tabular}{c|c|c|c|c}
\textbf{Problem}                                                            & \textbf{Additive Error}                               & \textbf{DP}                                         & \textbf{Output}                         & \textbf{Runtime}                    \\ \hline
Global Min Cut~\cite{gupta2010differentially}                                                              & $\Theta(\log(n)/\eps)$                                & \cellcolor[HTML]{9AFF99}Pure                        & \cellcolor[HTML]{9AFF99}Cut             & \cellcolor[HTML]{FFCCC9}Exponential \\
Global Min Cut~\cite{gupta2010differentially}                                                              & $\Theta(\log(n)/\eps)$                                & \cellcolor[HTML]{FFFC9E}Approx                      & \cellcolor[HTML]{9AFF99}Cut             & \cellcolor[HTML]{9AFF99}Polynomial  \\
Min-$s$-$t$-Cut~\cite{dalirrooyfard2024cuts}                                                               & $O(n/\eps)$ and $\Omega(n)$                           & \cellcolor[HTML]{9AFF99}{\color[HTML]{333333} Pure} & \cellcolor[HTML]{9AFF99}Cut             & \cellcolor[HTML]{9AFF99}Polynomial  \\
All Cuts~\cite{DBLP:conf/tcc/GuptaRU12}                                                                    & $O(n^{3/2}/\eps)$                                     & \cellcolor[HTML]{9AFF99}Pure                        & \cellcolor[HTML]{9AFF99}Synthetic Graph     & \cellcolor[HTML]{9AFF99}Polynomial  \\
All Cuts ~\cite{eliavs2020graphapprox, liu2024graphapprox}                                                                   & $\tilde{O}(\sqrt{mn}/\eps)$ and $\Omega(\sqrt{mn/\eps})$ & \cellcolor[HTML]{FFFC9E}Approx                      & \cellcolor[HTML]{9AFF99}Synthetic Graph & \cellcolor[HTML]{9AFF99}Polynomial  \\
APMC Values (Trivial)                                                       & $O(n/\eps)$                                           & \cellcolor[HTML]{FFFC9E}Approx                      & \cellcolor[HTML]{FFFC9E}Values Only         & \cellcolor[HTML]{9AFF99}Polynomial  \\
\begin{tabular}[c]{@{}c@{}}\textbf{APMC} \textbf{(Our Work)}\end{tabular} & $\Tilde{O}(n/\eps)$                                   & \cellcolor[HTML]{9AFF99}Pure                        & \cellcolor[HTML]{9AFF99}GH-Tree  & \cellcolor[HTML]{9AFF99}Polynomial 
\end{tabular}
}
\caption{State-of-the-art bounds for various cut problems with differential privacy. Dependencies on the approximate DP parameter $\delta$ are hidden. 
The APMC values result with approximate DP follows from advanced composition by adding Lap$(O(n/\eps))$ random noise to all $\binom{n}{2}$ true values. For APMC, the lower bound of $\Omega(n)$ error~\cite{dalirrooyfard2024cuts} also applies as APMC generalizes \minstcut.\label{table:comparision}}
\end{table}

Our main contribution is an $\eps$-DP algorithm for APMC with $\tO(n/\eps)$ additive error. Up to $\polylog(n)$ factors, our algorithm privately outputs all the \minstcut{}s while incurring the same error required to output an \minstcut for a single pair of vertices $s$ and $t$.
This closes the aforementioned gap in the literature in private cut problems and shows that the error required to privatize the APMC problem is closer to that of the \minstcut problem than to the synthetic graph/all-cuts problem.

To achieve this result, we introduce a DP algorithm that outputs an approximate Gomory-Hu tree (GH-tree). 
Gomory and Hu~\cite{gomory1961multi} showed that for any undirected graph $G$, there exists a tree $T$ defined on the vertices of graph $G$ such that for all pairs of vertices $s,t$, the \minstcut~in $T$ is also a \minstcut~in $G$. We develop a private algorithm for constructing such a tree. 


\begin{restatable}{theorem}{thmmaindpalg}\label{thm:main-DP-alg}
Given a weighted graph $G=(V, E, w)$ with positive edge weights and a privacy parameter $\eps > 0$, there exists an $\eps$-DP algorithm that outputs an approximate GH-tree $T$ with additive error $\tilde{O}(n/\eps)$: for any $s \ne t \in V$, the \minstcut{} on $T$ and the \minstcut{} on $G$ differ in $\tilde{O}(n/\eps)$ in cut value with respect to edge weights in $G$. 
The algorithm runs in time $\tilde O(n^2)$, and the additive error guarantee holds with high probability.
\end{restatable}

Since the GH-tree output by \cref{thm:main-DP-alg} is private, any post-processing on this tree is also private. This yields the following corollary.

\begin{restatable}{corollary}{cormaindpalg}\label{cor:main-DP-alg}
Given a weighted graph $G$ with positive edge weights and a privacy parameter $\eps > 0$, there exists an $\eps$-DP algorithm that  outputs a cut for all the pairs of vertices $s$ and $t$ whose value is within $\tilde{O}(n/\eps)$ from the value of the \minstcut{} with high probability.
\end{restatable}

Note that \cref{cor:main-DP-alg} is tight up to $\polylog(n)$ and $\frac{1}{\epsilon}$ factors since any (pure or approximate) DP algorithm outputting the \minstcut for fixed pair of vertices $s$ and $t$ requires $\Omega{(n)}$ additive error \cite{dalirrooyfard2024cuts}.
Another corollary of \cref{thm:main-DP-alg} is a polynomial time \emph{pure} DP algorithm for the global Min-Cut problem. 

\begin{restatable}{corollary}{globalmincut}\label{cor:global-min-cut}
Given a weighted graph $G$ with positive edge weights and a privacy parameter $\eps > 0$, there exists an $\eps$-DP algorithm that outputs an approximate global Min-Cut of $G$ in $\tilde O(n^2)$ time and has additive error $\tilde{O}(n/\eps)$ with high probability.
\end{restatable}

Note that \cite{gupta2010differentially} obtained an exponential time pure DP algorithm and a polynomial time approximate DP algorithm for global Min-Cut. It remains an intriguing question whether pure DP global Min-Cut can be found with only $\polylog n / \eps$ additive error and in polynomial time.

Lastly, we note an application to the minimum $k$-cut problem. Here, the goal is to partition the vertex set into $k$ pieces and the cost of any partitioning is weight of all edges that cross partitions. We wish to find the smallest cost solution. It is known that simply removing the smallest $k-1$ edges of an exact GH tree gives us a solution to the minimum $k$-cut problem with a multiplicative approximation of $2$ \cite{saran1995finding}. Since we compute an approximate GH tree with additive error $\tilde{O}(n/\eps)$, we can obtain a solution to minimum $k$-cut with multiplicative error $2$ and additive error $\tilde{O}(nk/\eps)$. We give the proof of the following corollary in \cref{sec:proofminkcut}.

\begin{restatable}{corollary}{minkcut}\label{cor:minimumkcut}
Given a weighted graph $G$ with positive edge weights and a privacy parameter $\eps > 0$, there exists an $\eps$-DP algorithm that outputs a solution to the minimum $k$-cut problem on $G$ in $\tilde O(n^2)$ time with multiplicative error $2$ and additive error $\tilde{O}(nk/\eps)$ with high probability.
\end{restatable}

The only prior DP algorithm for the minimum $k$-cut problem is given in \cite{chandra2024differentiallyprivatemultiwaykcut}. It requires \emph{approximate} DP and has the same multiplicative error $2$ but additive error $\tilde{O}(k^{1.5}/\eps)$. \cite{chandra2024differentiallyprivatemultiwaykcut} also give a pure DP algorithm which only handles unweighted graphs and requires exponential time. To the best of our knowledge, there are no prior pure DP algorithms which can compute the minimum $k$-cut on weighted graphs. It is an interesting question to determine the limits of pure DP and efficient (polynomial time) algorithms for the minimum $k$-cut problem.

\subsection{Technical Overview}

A greatly simplified view of a typical approach to designing a DP algorithm is to start with its non-DP version and then privatize it.
The main challenge is finding the right way to privatize an algorithm, assuming a way exists in the first place, and then proving that DP guarantees are indeed achieved.
To provide a few examples, Gupta et al.~\cite{gupta2010differentially} employ Karger's algorithm~\cite{karger1993global} to produce a set of cuts and then use the Exponential Mechanism \cite{mcsherry2007mechanism} to choose one of those cut.
This simple but clever approach results in an $(\eps,\delta)$-DP algorithm for global Min Cut, for $\delta = O(1/\poly(n))$, with $O(\log{n}/\eps)$ additive error.
Dalirrooyfard et al.~\cite{dalirrooyfard2024cuts} show that the following simple algorithm yields $\eps$-DP \minstcut~with $O(n / \eps)$ additive error: for each vertex $v$, add an edge from $v$ to $s$ and from $v$ to $t$ with their weights chosen from the exponential distribution with parameter $1/\eps$; return the \minstcut~on the modified graph.
In the rest of this section, we first describe why directly privatizing some of the existing non-private algorithms does not yield the advertised additive error. Then, we describe our approach.

\subsubsection{Obstacles in privatizing Gomory-Hu Trees}
The All-Pairs Min-Cut (APMC) problem outputs cuts for $\Theta(n^2)$ pairs but there are known to be only $O(n)$ distinct cuts. This property was leveraged in the pioneering work by Gomory and Hu~\cite{gomory1961multi}, who introduced the Gomory-Hu tree (GH-tree), a structure that succinctly represents all the $\binom{n}{2}$ \minstcut{}s in a graph.

A Gomory-Hu tree is constructed through a recursive algorithm that solves the \minstcut~problem at each recursion step. The algorithm can be outlined as follows: 
(1) Start with one \emph{supernode} in a tree $T$, consisting of all the vertices of the input graph $G$. 
(2) At each step, select an arbitrary supernode $S$ where $|S| > 1$ and choose two arbitrary vertices $s$ and $t$ within it.
(3) Consider the graph where all supernodes, except $S$, are each contracted into a single vertex. Compute the \minstcut~in this contracted graph.
(4) Using the result of this \minstcut~and the \emph{submodularity} of cuts (see \cref{lem:submod}), update the tree $T$ to ensure $s$ and $t$ are in different supernodes.
The GH-tree efficiently represents all pairwise min-cuts in the graph by iterating through these steps.

One can replace the \minstcut~procedures with the private \minstcut~algorithm of \cite{dalirrooyfard2024cuts} to privatize this algorithm. Several obstacles arise when attempting to turn this idea into an error-efficient algorithm.
Firstly, the algorithm of \cite{dalirrooyfard2024cuts} changes the graph. It is unclear whether we need to keep these changes for each run of the \minstcut~or revert back to the original graph. Secondly, the depth of this recursion could be $O(n)$, which means that using Basic Composition~\cite{dwork2006our}, we obtain an $\eps n$-DP algorithm with additive error $O(n^2 / \eps)$. Even if we apply Advanced Composition~\cite{dwork2014algorithmic}, which will result in an approximate DP algorithm, the resulting algorithm would be $(\eps \sqrt{n}, \delta)$-DP with additive error of $O(n^2 \log(1/\delta)/ \eps)$. This is significantly worse then prior work preserving all cuts.

\subsubsection{DP Approximate Min Isolating Cuts (\texorpdfstring{\cref{sec:iso-cuts}}{Section~\ref*{sec:iso-cuts}})}
The above approach suggests attempting to privatize a GH-tree algorithm with low recursion depth. 
There has been recent interest in low-depth recursion GH-tree algorithms (e.g., \cite{DBLP:conf/stoc/AbboudKT21,abboud2022breakingcubic,li2022nearly,abboud2023all}), intending to construct GH-tree using fewer than $O(n)$ \minstcut, i.e., Max-Flow, computations. 
Most of these works use an important sub-routine, called \emph{Min Isolating Cuts} introduced by \cite{DBLP:conf/focs/LiP20,DBLP:conf/stoc/AbboudKT21}.
For a set $U$ of vertices in $G$, 
let $w(U)$ be the sum of the weights of the edges in $G$ with exactly one endpoint in $U$. 
\begin{definition}[\minisolcut~\cite{DBLP:conf/focs/LiP20,DBLP:conf/stoc/AbboudKT21}]\label{def:min-isol-cut}
    Given a set of terminals $R \subseteq V$, the Min Isolating Cuts problem asks to output a collection of sets $\{S_v \subseteq V:v\in R\}$ such that for each vertex $v\in R$, the set $S_v$ satisfies $S_v\cap R = \{v\}$, and it has the minimum value of $w(S'_v)$ over all sets $S'_v$ that $S'_v\cap R = \{v\}$. In other words, each $S_v$ is the minimum cut separating $v$ from $R\setminus \{v\}$.
\end{definition}
\cite{DBLP:conf/focs/LiP20} and \cite{DBLP:conf/stoc/AbboudKT21} independently introduce the Isolating Cuts Lemma, showing how to solve the \minisolcut~problem in $O(\log{R})$ many \minstcut~runs. 
\cite{DBLP:conf/focs/LiP20} use it to find the global Min Cut in poly logarithmically many invocations of Max Flow and \cite{DBLP:conf/stoc/AbboudKT21} use it to compute GH-tree in simple graphs. Subsequent algorithms for GH-tree also use the Isolating Cuts Lemma, including the almost linear time algorithm for weighted graphs \cite{abboud2023all}. The Isolating Cuts Lemma has been extended to obtain new algorithms for finding the non-trivial minimizer of a symmetric submodular function and solving the hypergraph minimum cut problem \cite{mukhopadhyay2021note,chekuri2021isolating}.
We develop the first differentially private algorithm for finding \minisolcut. 
\begin{restatable}{theorem}{isolatingcuts}\label{thm:isolatingcuts}
    There is an $\eps$-DP algorithm that given a graph $G$ and a set of terminals $R$, outputs sets $\{S_v:v\in R\}$, such that for each vertex $v\in R$, the set $S_v$ satisfies $S_v\cap R = \{v\}$, and $w(S_v)\le w(S^*_v)+\tO(n/\eps)$ with high probability, where $\{S^*_v:v\in R\}$ are the \minisolcut for $R$.
\end{restatable}

\newcommand{\Slarge}{S_{\text{large}}}
\newcommand{\Glarge}{G_{\text{large}}}


Our overall algorithm builds on a reduction first shown by \cite{abboud2020cut} from GH-tree to single-source Min Cuts, where we need to compute the $s$-$t$ min cut values for a fixed $s$ and every $t \in V$. Recent almost linear time algorithms for GH-tree \cite{abboud2022breakingcubic, abboud2023all}  also utilize this reduction. We will mostly follow an exposition by Li  \cite{li2021preconditioning}. 

We now provide a high-level overview of the algorithm of \cite{li2021preconditioning}. The approach of \cite{li2021preconditioning} is a recursive algorithm with depth $\polylog(n)$. 
A generalized version of GH-tree called Steiner GH-Tree was considered, where instead of looking for \minstcut for all pairs of vertices $s$ and $t$ in $G$, we only focus on pairs $s,t\in U$ where $U$ is a subset of vertices. 
This especially helps in recursive algorithms, preventing the algorithm from computing the \minstcut of the pairs of vertices $s,t$ that appear in multiple instances of the problem multiple times. 

Each recursive step receives a set of terminals $U$, a source terminal $s\in U$, and a graph $G$. 
The goal of each recursive step is to decompose the graph into subgraphs of moderate sizes, enabling further recursion on each subgraph. To achieve this, the algorithm utilizes the \minisolcut Lemma. 
Specifically, it applies the \minisolcut Lemma to random subsets $U' \subseteq U$. For each $U'$ and $v \in U'$, a Min Isolating Cut $S_v$ is computed, which separates $v$ from all terminals in $U'$. Note that vertices in $U \setminus U'$ may be included in $S_v$.
The algorithm considers a set $S_v$ to be ``good'' if $w(S_v)$ equals the Min-$s$-$v$-Cut and $S_v$ contains at most $|U|/2$ terminals.
This criterion ensures that, during the recursion, the number of terminals decreases exponentially, resulting in the recursion depth being $\polylog(n)$. 

\cite{li2021preconditioning} shows that in one recursive step, one can obtain a collection of disjoint ``good" sets $S_v$. Let $\Slarge$ be the set of vertices that are not in any $S_v$. The authors show that $\Slarge$ has at most $|U| \cdot (1 - 1/\polylog(n))$ many terminals in $U$. The recursion branches out as follows: for each $S_v$, contract all the vertices that are not in $S_v$, and recurse on this graph $G_v$. For $\Slarge$, contract each $S_v$ into a vertex, and recurse on this graph $\Glarge$. It is similar to the original GH-tree algorithm~\cite{gomory1961multi} to aggregate the output of these recursions into a GH-tree.\footnote{Much of the effort in the recent works on constructing GH-tree is directed on obtaining Min-$s$-$v$-Cut values from $s$ to all $v$ in the recursive step efficiently.
From the point of view of privacy and error, this step is trivial as single source minimum cuts can be released with $O(n/\eps)$ error via basic composition as there are $n$ values.} 

\subsubsection{DP Gomory-Hu Tree \texorpdfstring{(\cref{sec:gh-tree-step,sec:final-algo})}{Sections \ref*{sec:gh-tree-step} and \ref*{sec:final-algo}}}
We now outline the main challenges and how we overcome them in obtaining a differentially private GH-tree.

\paragraph{Approximately good sets.}
We will follow the high-level strategy of using a \minisolcut algorithm (now via our DP algorithm described above) to find sets $S_v$ which are \emph{approximately} \minstcut{}s.
In defining which approximate \minisolcut{} are ``good'' and which our algorithm will recurse on, we want to accomplish three goals:
\begin{enumerate}
    \item The sets $S_v$ we return are $\tilde{O}(n/\eps)$-additive approximate Min-$s$-$v$-Cuts.
    \item Each set $S_v$ does not contain more than a constant fraction of $U$ in order to bound the recursion depth on $S_v$.
    \item The union of sets $S_v$ we return contain a $1/\polylog(n)$ fraction of the vertices in $U$, this ensures that recursing on $\Slarge$ has reasonably bounded depth.
\end{enumerate}

The first goal can be achieved by privately estimating the value of each Min-$s$-$v$-Cut and ensuring that $w(S_v)$ is close to this value up to approximation errors due to privacy.

To address the second goal, we only require a good set $S_v$ to have at most $0.9|U|$ terminals in $U$.
To bias the algorithm towards selecting smaller sets, we use the following idea. Consider a graph $G$ with terminals $s$ and $t$, and a subset of vertices $U$. We aim to obtain a DP \minstcut~such that the $s$-side of the cut contains a small number of vertices in $U$, without significantly sacrificing accuracy. To achieve this, we add edges from $t$ to every vertex in $U$ with a certain weight, penalizing the placement of vertices in $U$ on the $s$-side of the cut. By increasing the error of our \minisolcut algorithm by only a constant factor, we enforce that if a true minimum isolating cut of size $|U|/2$ exists, we will output an isolating of size at most $0.9|U|$.

A subtlety in addressing the final goal is that in the original analysis of \cite{li2021preconditioning}, an argument is made that randomly sampling $U' \subset U$ will, with reasonable probability, mean that for some $v$, its Min Isolating Cut $S_v$ will be the same as its Min-$s$-$v$-Cut $S^*_v$. In particular, this will be true if $v$ is the only vertex sampled on its side of the Min-$s$-$v$-Cut. For the right sampling rate, this happens with probability $O(1/|S^*|)$ but contributes $|S^*|$ to the output size. So, in this case, each vertex in $U$ contributes constant expectation to the output size (up to $\log$ factors coming from choosing the right sampling rate and enforcing the second goal). Unfortunately, even though the \minisolcut output by our DP algorithm have small additive error, it is possible for them to be much smaller than the optimal $|S^*_v|$ even if we sample the right terminals to get a set $S_v$ with approximately the same cost as $S^*_v$.
In order to make the analysis work with approximation, we give up on comparing to $S^*_v$ and instead compare ourselves to the \emph{smallest cardinality} set $\tilde{S}_v$ which is an approximate Min-$s$-$v$-Cut. 
For the full argument, we adjust our notion of approximation based on the size of $\tilde{S}_v$, allowing for weaker approximations for smaller cardinality sets. This only degrades our approximation with respect to the first goal by an extra logarithmic factor.

\paragraph{The privacy guarantee.}
The second challenge we face is controlling the privacy budget. We need to demonstrate that for two neighboring graphs, the distribution of the outputs of $\polylog(n)$ recursive layers differs by at most a factor of $e^{\eps}$. To achieve this, we split our privacy budget across $\polylog(n)$ recursive sub-instances. 
We point out that the recursion depth of $\polylog(n)$ does not directly imply that the privacy budget is split across $\polylog(n)$ instances. 
We describe this challenge in more detail and the way our algorithm bypasses it.

To understand this more clearly, first consider two neighboring instances, $G$ and $G'$, that differ by the edge $xy$. Suppose that in the first step of the algorithm, the good sets $S_v$ and $\Slarge$ are the same in both graphs $G$ and $G'$. If both $x$ and $y$ are in one of the good sets $S_v$ outputted by the first step, or if they are both in $\Slarge$, all respective recursion instances are exactly the same except for one. For example, if $x, y \in S_v$, then in all instances where $S_v$ is contracted, the resulting recursion graph does not depend on the edge $xy$ and is, therefore, the same in both instances.

On the other hand, suppose that $x \in S_v$ and $y \in S_u$. In this case, the edge $xy$ influences multiple recursion instances: specifically, in $G_v$, $G_u$, and $\Glarge$ (similarly $G'_v$, $G'_u$, and $G'_{\text{large}}$). 
This poses a challenge because the computation in multiple branches of the same recursion call depends on the same edge. Consequently, the privacy guarantee is not solely affected by the recursion depth.

To overcome this, we add a simple step before recursion on some of the new instances. For instances obtained by contracting all vertices in $V \setminus S_v$ to a single vertex and recursing on the resulting graph, we first add an edge from the contracted vertex to every vertex in $S_v$ with weights drawn from $\Lap(\polylog(n)/\eps)$. Then we recurse on this altered graph. The intuition is that in neighboring instances, if $x \in S_v$ and $y \notin S_v$, these noisy edges cancel the influence of $xy$, preventing its influence from propagating further down this branch of the recursion. Thus, $xy$ only impacts $\Glarge$. Fortunately, the additional added noise only contributes $\tO(n/\eps)$ to the final additive error as noise is only added on $O(n)$ edges.


\subsection{Organization}
We provide necessary definitions and basic lemmas in \cref{sec:preliminaries}. We present our DP \minisolcut algorithm in \cref{sec:iso-cuts} and prove \cref{thm:isolatingcuts}. In \cref{sec:gh-tree-step} we present the recursive step we use in our DP GH-tree algorithm, and in \cref{sec:final-algo} we present our final algorithm and prove \cref{thm:main-DP-alg}. 
\subsection{Open Problems}
Our algorithm outputs an $\tilde O(n/\eps)$-approximate Gomory-Hu Tree from which we can recover an approximate Min-$s$-$t$-Cut for any two distinct vertices $s$ and $t$. The additive error is necessary up to $\polylog n$ and $1/\eps$ factors by the lower bound in~\cite{dalirrooyfard2024cuts} for just releasing a single Min-$s$-$t$-Cut.

It is an interesting open question whether one can get a better additive error or prove lower bounds in the setting where we are only required to release approximate \emph{values} of the cuts and not the cuts themselves. To our knowledge, the best achievable error is $O(n/\eps)$ under approximate DP via the trivial algorithm, which adds $\Lap(n/\eps)$ noise to each true value (this is private via ``advanced composition''~\cite{dwork2014algorithmic}). Prior to our work, the best algorithm in the pure DP setting was to solve the all-cuts/synthetic graph problem, incurring $O(n^{3/2}/\eps)$ error~\cite{DBLP:conf/tcc/GuptaRU12}. Our work yields $\tO(n/\eps)$ error for this problem also with pure DP, but no non-trivial lower bound is known for releasing APMC values. The $\Omega(n)$ lower bound of~\cite{dalirrooyfard2024cuts} is for releasing an approximate \minstcut, but releasing a single cut \emph{value} can be done trivially with $O(1/\eps)$ error via the Laplace mechanism. This question parallels the all-pairs shortest-paths distances problem studied in~\cite{sealfon2016shortest, DBLP:conf/nips/Fan0L22, DBLP:conf/soda/ChenG0MNNX23, bodwin2024discrepancy} for which sublinear additive error is possible. 

Another open problem is whether there is a polynomial time $\eps$-DP algorithm for the global Min Cut problem with error below $\tilde O(n/\eps)$. As mentioned below~\cref{cor:global-min-cut}, the polynomial time algorithm from~\cite{gupta2010differentially} is only approximate DP (but can be made pure DP if the runtime is exponential). Lastly, the same question can be asked for the minimum $k$-cut problem (see \cref{cor:minimumkcut}): what are the limits of efficient, i.e., polynomial time, algorithms that are also pure DP?

\section{Preliminaries}
\label{sec:preliminaries}
\subsection{Notation}
We will denote a weighted, undirected graph by $G = (V,E, w)$. For a subset of vertices $S \subseteq V$, we will use $\partial_G S$, or simply $\partial S$, to denote the set of edges between $S$ and $V \setminus S$. For a set of edges $Q \subseteq E$, we will use $w(Q)$ to denote the sum of the weights of the edges in $Q$. For a set of vertices $S \subseteq V$, we will use $w(S)$ to denote $w(\partial S)$, for simplicity. For $s, t \in V$, we use $\lambda_G(s, t)$ to denote the \minstcut~value in $G$. We will assume that all \minstcut{}s or \minisolcut~are unique. 
For $n > 0$, $\lg(n)$ is logarithm base-$2$ and $\ln(n)$ is logarithm base-$e$.

\subsection{Graph Cuts}

In our algorithms, we use the notion of \emph{vertex contractions} which we formally define here.

\begin{definition}[Vertex contractions]\label{def:node-contraction}
Let $X\subseteq V$ be a subset of vertices of the graph $G=(V,E,w)$. Contracting the set $X$ into a vertex is done as follows: we add a vertex $x$ to the graph and remove all the vertices in $X$ from the graph. Then for every vertex $v\in V\setminus X$, we add an edge from $x$ to $v$ with weight $\sum_{x'\in X} w(x'v)$. Note that if none of the vertices in $X$ has an edge to $v$, then there is no edge from $x$ to $v$.
\end{definition}


We use the submodularity property of cuts in many of our proofs.
\begin{lemma}[Submodularity of Cuts \cite{cunningham1985minimum}]
\label{lem:submod}
    For any graph $G = (V,E, w)$, and any two subsets $S, T \subseteq V$, 
    \[
    w(S) + w(T) \ge w(S \cup T) + w(S \cap T). 
    \]
\end{lemma}

Recall the definition of \minisolcut~problem (see \cref{def:min-isol-cut}). We use the following simple fact.
\begin{fact}[\cite{DBLP:conf/focs/LiP20}]
    There always exists a minimum isolating cuts solution where the solution $\{S_v : v \in R\}$ are disjoint.
\end{fact}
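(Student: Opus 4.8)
The plan is to derive the Fact from two standard properties of the cut function $f(S) := w(\partial_G S)$ — symmetry ($f(S) = f(V \setminus S)$) and submodularity ($f(A) + f(B) \ge f(A \cap B) + f(A \cup B)$) — via an uncrossing argument. First I would record the \emph{posimodular} inequality $f(A) + f(B) \ge f(A \setminus B) + f(B \setminus A)$: this follows by applying submodularity to the pair $A,\, V \setminus B$ and rewriting the resulting terms using symmetry (in particular $f(A \cup (V \setminus B)) = f(B \setminus A)$ and $f(V \setminus B) = f(B)$). Posimodularity is the only structural inequality the rest of the argument needs.

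For the main argument, let $c_v := \min\{w(\partial S) : S \cap R = \{v\}\}$ for each $v \in R$; this minimum is attained since $\{v\}$ is feasible and $V$ is finite. The family of minimum isolating cut solutions — families $\{S_v : v \in R\}$ with $S_v \cap R = \{v\}$ and $w(\partial S_v) = c_v$ for all $v$ — is nonempty, so I would select one minimizing the potential $\Phi := \sum_{v \in R} |S_v|$ and argue it must be pairwise disjoint.

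Suppose some node $x$ lies in $S_u \cap S_v$ with $u \ne v$. Because $S_v \cap R = \{v\}$, we get $u \notin S_v$, hence $S_u \setminus S_v$ still contains $u$ and meets $R$ only in $\{u\}$; symmetrically $(S_v \setminus S_u) \cap R = \{v\}$. So both sets are feasible, giving $w(\partial(S_u \setminus S_v)) \ge c_u$ and $w(\partial(S_v \setminus S_u)) \ge c_v$, while posimodularity gives $c_u + c_v = w(\partial S_u) + w(\partial S_v) \ge w(\partial(S_u \setminus S_v)) + w(\partial(S_v \setminus S_u))$. Hence both inequalities are tight, so replacing $S_u$ by $S_u \setminus S_v$ yields another minimum isolating cut solution; but this drops $x$ and adds no vertex, so $\Phi$ strictly decreases, contradicting our choice.

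I expect the only delicate point to be ensuring that uncrossing one pair does not re-introduce an overlap with a third set; routing the whole argument through the global potential $\Phi$ sidesteps this, provided one checks the two easy facts that make the potential drop — namely that $S_u \setminus S_v$ still contains $u$ (so the swap stays feasible and the set is nonempty), and that the swap removes at least the vertex $x$ while adding none. The other place to be careful is the symmetry bookkeeping in the derivation of posimodularity, since that is the one step where both hypotheses on $f$ are actually used.
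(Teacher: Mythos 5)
Your argument is correct, and it is essentially the standard proof underlying the cited result. The paper itself gives no proof here — the Fact is imported from \cite{li2020deterministic} — so there is nothing internal to compare against, but the route you take (deriving posimodularity $f(A)+f(B) \ge f(A\setminus B) + f(B\setminus A)$ from submodularity and symmetry of the cut function, then uncrossing a hypothetical overlap while tracking the potential $\Phi = \sum_{v\in R} |S_v|$) is exactly the argument used in that source and in most subsequent treatments. Two minor points worth making explicit when writing it up: (i) a $\Phi$-minimizer exists because the set of minimum isolating cut solutions is nonempty and $\Phi$ ranges over a finite set of nonnegative integers, so the selection step is well defined; and (ii) after the swap $S_u \mapsto S_u\setminus S_v$, every other coordinate $S_r$ is unchanged and hence still feasible and optimal, which is why the contradiction localizes to the pair $(u,v)$ even though $\Phi$ is a global potential — you gesture at this, but it is the precise reason the argument does not need to worry about re-introducing overlaps with third sets.
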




\begin{definition}[Gomory-Hu Steiner tree \cite{li2021preconditioning}]\label{def:gh-steiner-tree} Given a graph $G = (V, E, w)$ and a set of terminals $U\subseteq V$, the Gomory-Hu Steiner tree is a weighted tree $T$ on the vertices $U$, together with a function $f: V\rightarrow U$, such that
For all $s, t\in U$, consider the minimum-weight edge $uv$ on the unique $s$-$t$ path in $T$. Let $U_0$ be the vertices of the connected component of $T-uv$ containing $s$. Then, the set $f^{-1}(U_0)\subseteq V$ is a \minstcut, and its value is $w_T(uv)$.

Note that for $U = V$ and $f(v) = v$ the Gomory-Hu Steiner tree equals the Gomory-Hu tree.
\end{definition}

\subsection{Differential Privacy}

\begin{definition}[Edge-Neighboring Graphs]
Graphs $G=(V,E,w)$ and $G'=(V,E',w')$ are called \emph{edge-neighboring} if there is $uv\in V^2$ such that $|w_G(uv)-w_{G'}(uv)|\le 1$ and for all $u'v'\neq uv$, $u'v'\in V^2$, we have $w_G(u'v')=w_{G'}(u'v')$. 
\end{definition}



\begin{definition}[Differential Privacy \cite{dwork2006differential}]A  (randomized) algorithm $\mathcal{A}$ is $(\eps,\delta)$-private (or $(\eps,\delta)$-DP) if for any neighboring graphs $G$ and $G'$ and any set of outcomes $O\subset Range(\mathcal{A})$ it holds
$$
\prob{\mathcal{A}(G)\in O} \le e^{\eps}\prob{\mathcal{A}(G')\in O}+\delta.
$$
When $\delta=0$, algorithm $\mathcal{A}$ is \emph{pure differentially private}, or $\eps$-DP.
\end{definition}

\begin{theorem}[Basic composition \cite{DBLP:conf/tcc/DworkMNS06, dwork2009differential}]\label{thm:basic_comp}
 Let $\eps_1,\ldots,\eps_t>0$ and $\delta_1,\ldots,\delta_t\ge  0$. If we run $t$ (possibly adaptive) algorithms where the $i$-th algorithm is $(\eps_i,\delta_i)$-DP, then the entire algorithm is $(\eps_1+\ldots+\eps_t,\delta_1+\ldots+\delta_t)$-DP.
\end{theorem}

\begin{theorem}[Laplace mechanism~\cite{dwork2014algorithmic}]\label{thm:laplace}
Consider any function $f$ which maps graphs $G$ to $\mathbb{R}^d$ with the property that for any two neighboring graphs $G, G'$, $|f(G) - f(G')| \leq \Delta$. Then, releasing
\[
f(G) + (X_1, \ldots, X_d)
\]
where each $X_i$ is i.i.d.\ with $X_i \sim \Lap(\Delta/\eps)$ satisfies $\eps$-DP.
\end{theorem}

\begin{theorem}[Private Min-$s$-$t$-Cut~\cite{dalirrooyfard2024cuts}]\label{thm:min-cut}
Fix any $\eps > 0$. There is an $(\eps, 0)$-DP algorithm $\stcutalgo(G = (V, E, w), s, t)$ for $s \ne t \in V$ that reports an $s$-$t$ cut for $n$-vertex weighted graphs that is within $O(\frac{n}{\eps})$ additive error from the \minstcut~with high probability. 
\end{theorem}
By standard techniques, we can also use \cref{thm:min-cut} to design an $(\eps, 0)$-DP algorithm for computing an approximate min-$S$-$T$-cut for two disjoint subsets $S, T \subseteq V$ that is within $O(\frac{n}{\eps})$ additive error from the actual min-$S$-$T$-cut (e.g., by contracting all vertices in $S$ and all vertices in $T$ to two supernodes). 
Furthermore, our final algorithm is recursive with many calls to Private Min-$S$-$T$-Cut for graphs with few vertices, and it is not enough to succeed with high probability with respect to $n$. 
The error analysis of~\cite{dalirrooyfard2024cuts} shows that the error is bounded by the sum $O(n)$ random variables distributed as $\Exp(\eps)$.
Using~\cref{thm:exponential-sums} yields the following corollary:
\begin{corollary}[Private Min-$S$-$T$-Cut]\label{cor:min-S-T-cut}
Fix any $\eps > 0$, there exists an $(\eps, 0)$-DP algorithm \\
$\STcutalgo(G = (V, E, w), S, T, \eps)$ for disjoint $S, T \subseteq V$ that reports a set $C \subseteq V$ where $S \subseteq C$ and $C \cap T = \emptyset$, and $w(\partial C)$ is within $O\paren{\frac{n + \log(1/\beta)}{\eps}}$ additive error from the min-$S$-$T$-cut with probability at least $1-\beta$. 
\end{corollary}

\subsection{Concentration Inequalities}

\begin{theorem}[Sums of Exponential Random Variables (Theorem 5.1 of~\cite{janson2017tailbounds})]\label{thm:exponential-sums}
Let $X_1, \ldots, X_N$ be independent random variables with $X_i \sim \Exp(a_i)$. Let $\mu = \sum_{i=1}^N \frac{1}{a_i}$ be the expectation of the sum of the $X_i$'s and let $a^* = \min_i a_i$. Then, for any $\lambda \geq 1$,
\[
\prob{\sum_{i \in S} X_i \geq \lambda \mu} \leq \frac{1}{\lambda} \exp\brack{-a^* \mu (\lambda - 1 - \ln \lambda)}
\]
\end{theorem}


\section{Private Min Isolating Cuts}
\label{sec:iso-cuts}
In this section we prove \cref{thm:isolatingcuts}. In fact we prove a stronger version denoted in \cref{lemma:private-isolating-cuts}.
The steps in the algorithm that differ meaningfully from the non-private version are \novel{in color}. 

\begin{algorithm}
\centering
\caption{\isocutsalgo{}$(G=(V, E, w), R, U, \eps, \beta)$}
\label{algo:isolating}
\begin{algorithmic}[1]
\State Initialize $W_r \gets V$ for every $r \in R$
\State Identify $R$ with $\{0, \ldots, |R| - 1\}$
\For{$i$ from $0$ to $\lfloor \lg(|R| - 1) \rfloor$}
\label{line:algo:isolating:for-loop-1}
\State $A_i \gets \{r \in R: r \bmod 2^{i+1} < 2^i\}$
\State $C_i \gets $\novel{$\STcutalgo(G, A_i, R \setminus A_i, \eps / (\lg |R| + 2))$}
\State $W_r \gets W_r \cap C_i$ for every $r \in A_i$
\State $W_r \gets W_r \cap (V \setminus C_i)$ for every $r \in R \setminus A_i$
\EndFor 
\For{$r \in R$}
\State Let $H_r$ be $G$ with all vertices in $V \setminus W_r$ contracted, and let $t_r$ be the contracted vertex
\State \label{line:algo:isolating:negative-weight}\novel{In $H_r$, add weight $\STCutConstant{} \cdot \frac{(n+\lg(1/\beta)) \lg^2(|R|)\}}{\eps |U|}$ for every edge from  vertex in $W_r \cap U$ to $t_r$ for some sufficiently large constant $\STCutConstant{}$}. 
\EndFor
\State $\cH \gets \bigcup_{r \in R} H_r$
\State $\cC \gets $\novel{$\STcutalgo(\cH, R, \{t_r\}_{r \in R}, \eps / (\lg |R| + 2))$}
\State \Return $\{\cC \cap W_r\}_{r \in R}$
\end{algorithmic}
\end{algorithm}

\begin{lemma}\label{lemma:private-isolating-cuts}
On a graph $G$ with $n$ vertices, a set of terminals $R \subseteq V$, another set of vertices $U \subseteq V$, and a privacy parameter $\eps$, there is an $(\eps, 0)$-DP algorithm \isocutsalgo{}$(G, R, U, \eps, \beta)$ that returns a set of Isolating Cuts over terminals $R$. The total cut values of the Isolating Cuts is within additive error $O((n + \lg(1/\beta))\lg^2(|R|) / \eps)$ from the \minisolcut~with probability $1-\beta$. 

Furthermore, if the Min Isolating Cut for any terminal $r \in R$ contains at most $0.5|U|$ vertices from $U$, then the Isolating Cut for $r$ returned by the algorithm will contain at most $0.9|U|$ vertices from $U$, with high probability. 
\end{lemma}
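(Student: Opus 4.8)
The plan is to establish the three assertions — $(\eps,0)$-privacy, the $O((n+\lg(1/\beta))\lg^2|R|/\eps)$ additive error on the total cut value, and the balance guarantee — in that order. Throughout I write $\eps'=\eps/(\lg|R|+2)$ for the per-call budget and let $\{S^*_r:r\in R\}$ be the (disjoint) true minimum isolating cuts, $\mathrm{OPT}=\sum_r w(\partial S^*_r)$. Privacy is the routine part: the algorithm is an adaptive composition of at most $\lg|R|+2$ calls to \STcutalgo{}, each $\eps'$-DP in $G$ by Claim~\ref{claim:stcut}, and everything else is post-processing — the sets $W_r$ and the contracted graphs $H_r$ are functions of $G$ and of previously released cuts, and the penalty weights added in Line~\ref{line:algo:isolating:negative-weight} depend only on the public quantities $n,|U|,|R|,\eps,\beta$ together with the $W_r$. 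I would just check that a unit change to one edge of $G$ perturbs each \STcutalgo{} instance by $O(1)$ total edge weight (contraction only merges or deletes the edge, and in $\cH$ an edge of $G$ touches at most one gadget per endpoint), and then invoke basic composition, adjusting the split to $\eps/(\lg|R|+O(1))$ if the final call really has sensitivity two.

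\textbf{Accuracy.} Condition on the event $\mathcal E$ that in every call the total absolute noise on the added $z$-edges is at most $\eta:=O((n+\lg(1/\beta))\lg|R|/\eps)$; since each call perturbs $n+O(1)$ edges by i.i.d.\ $\Lap(1/\eps')$ weights, Laplace concentration and a union bound give $\Pr[\mathcal E]\ge 1-\beta$. On $\mathcal E$, because each $C_i$ minimizes the noised cut value and every cut's noise has magnitude $\le\eta$, it is an $\eta_i$-approximate minimum $(A_i,R\setminus A_i)$-cut with $\eta_i\le 2\eta$; since $C_i$ genuinely separates $A_i$ from $R\setminus A_i$ we get $W_r\cap R=\{r\}$, so each $\cC\cap W_r$ is a valid isolating cut for $r$ and hence $\sum_r w(\partial(\cC\cap W_r))\ge\mathrm{OPT}$. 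For the matching upper bound I would use the feasible witness $T_r:=S^*_r\cap W_r$ (contains $r$, lies in $W_r$) for the final min-cut instance on $\cH$, writing $T_r^{(i)}$ for $S^*_r$ intersected with $r$'s side of $C_0,\dots,C_i$. At round $i$, summing the submodular inequality $w(\partial T_r^{(i)})+w(\partial(T_r^{(i-1)}\cup C_i))\le w(\partial T_r^{(i-1)})+w(\partial C_i)$ over $r\in A_i$ and then applying $\sum_{r\in A_i}w(\partial(C_i\cup T_r^{(i-1)}))\ge(|A_i|-1)\,w(\partial C_i)+w(\partial(C_i\cup\bigcup_{r\in A_i}T_r^{(i-1)}))$ (valid because the $T_r^{(i-1)}$ are pairwise disjoint, by iterating submodularity) together with feasibility of $C_i\cup\bigcup_{r\in A_i}T_r^{(i-1)}$ for the $(A_i,R\setminus A_i)$-cut (so its value is $\ge w(\partial C_i)-\eta_i$), the $|A_i|\,w(\partial C_i)$ terms cancel and leave $\sum_{r\in A_i}w(\partial T_r^{(i)})\le\sum_{r\in A_i}w(\partial T_r^{(i-1)})+\eta_i$; the symmetric bound holds for $R\setminus A_i$, so summing over rounds gives $\sum_r w(\partial T_r)\le\mathrm{OPT}+2\sum_i\eta_i$. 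Finally, since the gadgets of $\cH$ are edge-disjoint, comparing the globally optimal penalized-and-noised cut $\cC$ against $\bigsqcup_r T_r$, dropping the nonnegative penalty on $\cC$'s side, bounding $\sum_r\lambda|T_r\cap U|\le\lambda|U|=\STCutConstant(n+\lg(1/\beta))\lg^2|R|/\eps$ by disjointness, and bounding the two noise terms by $\eta$ each, yields $\sum_r w(\partial(\cC\cap W_r))\le\mathrm{OPT}+2\sum_i\eta_i+\lambda|U|+2\eta=\mathrm{OPT}+O((n+\lg(1/\beta))\lg^2|R|/\eps)$.

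\textbf{Balance.} Fix $r$ with $|S^*_r\cap U|\le 0.5|U|$. Taking $z$ to be the contracted terminal set $R$ in the final \STcutalgo{} call keeps the noise edges inside individual gadgets, so the returned cut restricted to $H_r$ is the minimum $(r,t_r)$-cut of the penalized, noised $H_r$; comparing it against $T_r=S^*_r\cap W_r$ there gives $w(\partial(\cC\cap W_r))+\lambda|(\cC\cap W_r)\cap U|\le w(\partial T_r)+\lambda|T_r\cap U|+2\eta$. Using $w(\partial(\cC\cap W_r))\ge w(\partial S^*_r)$ (feasible isolating cut), the pointwise uncrossing bound $w(\partial T_r)\le w(\partial S^*_r)+\sum_i\eta_i$, and $|T_r\cap U|\le 0.5|U|$, this becomes $\lambda|(\cC\cap W_r)\cap U|\le 0.5\,\lambda|U|+\sum_i\eta_i+2\eta$. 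Since $\sum_i\eta_i+2\eta=O((n+\lg(1/\beta))\lg^2|R|/\eps)=O(\lambda|U|/\STCutConstant)$, taking $\STCutConstant$ a large enough absolute constant forces the right side below $0.9\,\lambda|U|$, i.e.\ $|(\cC\cap W_r)\cap U|\le 0.9|U|$; this holds on $\mathcal E$, which with a little slack in the concentration bounds can be made to hold with high probability.

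\textbf{Where the difficulty lies.} The only genuinely delicate step is the loop analysis in the accuracy argument. The obvious per-terminal uncrossing shows $w(\partial(S^*_r\cap W_r))\le w(\partial S^*_r)+\sum_i\eta_i$ for each $r$ separately, and summing that over the (up to $n$) terminals would inflate the additive error by a factor of $|R|$. The fix is to charge the $\eta_i$ error of the single cut $C_i$ once per round rather than once per terminal, which is exactly what the iterated-submodularity inequality for the disjoint family $\{T^{(i-1)}_r\}_{r\in A_i}$ buys; getting that identity, and its feasibility hypotheses, right is the main work. A lesser point to pin down is that a single \STcutalgo{} run on the disconnected graph $\cH$ still decomposes over the $H_r$ once the Laplace edges are present, handled by the choice of $z$ above.
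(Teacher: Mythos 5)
Your proof follows the same overall architecture as the paper's: privacy by basic composition of $\lfloor\lg(|R|-1)\rfloor+2$ calls to \STcutalgo{}, accuracy by an inductive, per-round uncrossing of the true isolating cuts $S^*_r$ against the noisy cuts $C_i$ so that each round's noise is charged once rather than once per terminal, and the balance guarantee by comparing the penalized $H_r$-cost of $\cC\cap W_r$ to that of $S^*_r\cap W_r$ and letting $\STCutConstant{}$ absorb the additive error. The one place you genuinely diverge is the proof of the per-round inequality. The paper sets $S'_{A_i}=\bigcup_{r\in A_i}(S_r\cap W_r)$, applies a single submodular uncrossing of $S'_{A_i}$ against $C_i$, and then reconciles $\sum_r w(\partial(W_r\cap S_r\cap C_i))$ with $w(\partial(S'_{A_i}\cap C_i))$ via an explicit cross-edge decomposition identity together with the monotonicity of cross edges under intersection with $C_i$. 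You instead sum the $|A_i|$ individual submodular inequalities $w(\partial T_r^{(i)})+w(\partial(T_r^{(i-1)}\cup C_i))\le w(\partial T_r^{(i-1)})+w(\partial C_i)$ and collapse $\sum_r w(\partial(C_i\cup T_r^{(i-1)}))$ using the iterated-submodularity bound for a disjoint family, a route that sidesteps the cross-edge bookkeeping entirely and is arguably cleaner. Both give exactly the paper's per-round bound. You also flag two implementation details the paper's prose glosses over — that an edge of $G$ between $W_{r_1}$ and $W_{r_2}$ appears in both $H_{r_1}$ and $H_{r_2}$, making the final \STcutalgo{} call have sensitivity $2$ in $G$, and that the noise vertex $z$ should be the merged $R$-supernode so that the minimum cut of $\cH$ decomposes over the edge-disjoint gadgets $H_r$ — both of which are correct and worth making explicit. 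One small side note: the paper's balance proof cites its summed Claim for the per-$r$ bound $w(S_r\cap W_r)\le w(S_r)+O(\cdot)$, which is not literally what that Claim says; your version cleanly states and uses the pointwise version $w(\partial T_r)\le w(\partial S^*_r)+\sum_i\eta_i$, which does follow from a single-terminal uncrossing and is the right way to justify that step.
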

\begin{proof}
    The algorithm is presented in \cref{algo:isolating}. On a high level, the algorithm follows the non-private \minisolcut~algorithm by \cite{DBLP:conf/focs/LiP20, DBLP:conf/stoc/AbboudKT21}, but replacing all calls to Min-$S$-$T$-Cut with private Min-$S$-$T$-Cut from \cref{cor:min-S-T-cut}. One added step is \cref{line:algo:isolating:negative-weight}, which is used to provide the guarantee that if the Min Isolating Cut for terminal $r$ contains a small number of vertices in $U$, then the isolating cut for terminal $r$ returned by the algorithm also does. 

    Next, we explain the algorithm in more detail. 
    For every $r \in R$, we maintain a set $W_r$ that should contain the $r$-side of a $(r, R \setminus \{r\})$ cut obtained in the algorithm. In each of the $\lfloor \lg(|R| - 1) \rfloor + 1$ iterations, we find a subset $A_i \subseteq R$, and find a cut that separates $A_i$ from $R \setminus A_i$. Let $C_i$ be the side of the cut containing $A_i$. Then for every $r \in A_i$, we update $W_r$ with $W_r \cap C_i$; for every $r \in R \setminus A_i$, we update $W_r$ with $W_r \cap (V \setminus C_i)$. 
    The choice of $A_i$ is so that every pair $r_1, r_2 \in R$ are on different sides of the Min-$S$-$T$ Cut in at least one iteration; as a result, $W_r \cap R = \{r\}$ for every $r \in R$ after all iterations. 

    Next, for every $r \in R$, the algorithm aims to compute a cut separating $r$ from $R \setminus \{r\}$, where the side containing $r$ is inside $W_r$. This can be done by contracting all vertices outside of $W_r$ to a vertex $t_r$ and computing private Min-$r$-$t_r$ Cut. To incentivize cuts that contain fewer vertices in $U$ on the side containing $r$, the algorithm adds a positive weight from every vertex in $W_r \cap U$ to $t_r$. Finally, these private Min-$r$-$t_r$ Cut instances can be solved at once by combining them into a single graph $\cH$. 

    \paragraph{Privacy analysis. } The only parts of the algorithm that depend on the edges or edge weights are the calls to \STcutalgo{}. Each call to \STcutalgo{} is $(\eps / (\lg |R| + 2), 0)$-DP, and the number of calls is $\lfloor \lg (|R|-1) \rfloor + 2$, so the overall algorithm is $\eps$-DP via basic composition (\cref{thm:basic_comp}). 

    \paragraph{Error analysis. } First, we analyze the error introduced by the \textbf{for} loop starting at \cref{line:algo:isolating:for-loop-1}. Let $\{S_r\}_{r \in R}$ be the (non-private) \minisolcut~for terminals in $R$, these are only used for analysis purposes. Take an iteration $i$ of the \textbf{for} loop and let $\{W_r\}_{r \in R}$ be the values of $W_r$'s before the start of the iteration, and let $\{W_r'\}_{r \in R}$ denote the value of $W_r$'s at the end of the iteration. We show the following claim:
    \begin{claim}
    \label{cl:isolating-iteration}
    With high probability, 
        $\sum_{r \in R} w\left( W'_r \cap S_r\right) \le \sum_{r \in R} w\left(W_r \cap S_r\right) + O(n \lg(|R|) / \eps)$. 
    \end{claim}
    \begin{proof}
        We first show $\sum_{r \in A_i} w\left(W'_r \cap S_r\right) \le \sum_{r \in A_i} w\left(W_r \cap S_r\right) + O(n\lg(|R|) / \eps)$. Let $S'_{A_i} := \bigcup_{r \in A_i} (S_r \cap W_r)$. By \cref{lem:submod}, 
        \begin{equation}
        \label{eq:isolating-iteration:eq1}
        w(S'_{A_i}) + w(C_i) \ge w(S'_{A_i} \cup C_i) + w(S'_{A_i} \cap C_i). 
        \end{equation}
        Recall that with probability $1 - \beta/\lg|R|$, $C_i$ is within $O((n + \lg(\lg|R|/\beta))\lg(|R|) / \eps) = O((n + \lg(1/\beta))\lg(|R|) / \eps)$ of the minimum cut separating $A_i$ and $R \setminus A_i$, by the guarantee of \cref{cor:min-S-T-cut}, and note that $S'_{A_i} \cup C_i$ is also a cut separating $A_i$ and $R \setminus A_i$. Therefore, 
        \begin{equation}
            \label{eq:isolating-iteration:eq2}
            w(C_i) \le w(S'_{A_i} \cup C_i) + O((n + \lg(1/\beta))\lg(|R|) / \eps). 
        \end{equation}
        Combining \cref{eq:isolating-iteration:eq1,eq:isolating-iteration:eq2}, we get that 
        \begin{equation}
        \label{eq:isolating-iteration:eq3}
        w(S'_{A_i} \cap C_i) \le w(S'_{A_i}) + O((n + \lg(1/\beta))\lg(|R|) / \eps).
        \end{equation}
        Therefore, 
        \begin{align*}
            \sum_{r \in A_i} w\left(W'_r \cap S_r\right) &= \sum_{r \in A_i} w\left(W_r \cap S_r \cap C_i\right)\\
            &= w\left(\bigcup_{r \in A_i} \left(W_r \cap S_r \cap C_i\right) \right) + \sum_{r_1 \ne r_2 \in A_i} w\left(E \cap \left(\left(W_{r_1} \cap S_{r_1} \cap C_i\right) \times  \left(W_{r_2} \cap S_{r_2} \cap C_i\right)\right)\right)\\
            &\le w(S'_{A_i} \cap C_i) + \sum_{r_1 \ne r_2 \in A_i} w\left(E \cap \left(\left(W_{r_1} \cap S_{r_1}\right) \times  \left(W_{r_2} \cap S_{r_2} \right)\right)\right)\\
            &\le w(S'_{A_i}) + O((n + \lg(1/\beta)) \lg(|R|) / \eps) + \sum_{r_1 \ne r_2 \in A_i} w\left(E \cap \left(\left(W_{r_1} \cap S_{r_1}\right) \times  \left(W_{r_2} \cap S_{r_2} \right)\right)\right) \tag{by \cref{eq:isolating-iteration:eq3}}\\
            &= \sum_{r \in A_i} w\left(W_r \cap S_r\right) + O((n + \lg(1/\beta))\lg(|R|) / \eps).
        \end{align*}
        By an analogous argument, we can show $\sum_{r \in R \setminus A_i} w\left(W'_r \cap S_r\right) \le \sum_{r \in R \setminus A_i} w\left(W_r \cap S_r\right) + O(n\lg(|R|) / \eps)$. Summing up the two inequalities gives the desired claim. 
    \end{proof}

    By applying \cref{cl:isolating-iteration} repeatedly, we can easily show the following claim:
    \begin{claim}
    \label{cl:isolating-cut-claim2}
        At the end of the \textbf{for} loop starting at \cref{line:algo:isolating:for-loop-1}, $\sum_{r \in R} w(W_r \cap S_r) \le \sum_{r \in R} w(S_r) + O((n + \lg(1/\beta)) \lg^2(|R|) / \eps)$, with probability $1-\beta$. 
    \end{claim}

    The following claim is a simple observation:
    \begin{claim}
    \label{cl:isolating-cut-claim3}
        At the end of the \textbf{for} loop starting at \cref{line:algo:isolating:for-loop-1}, $r \in W_r$ for every $r \in R$ and distinct $W_r$'s are disjoint. 
    \end{claim}

    Next, we show that the Min-$r$-$t_r$ Cut values in $H_r$ are close to the Min Isolating Cut values:

    \begin{claim}
    \label{cl:isolating-cut-claim4}
        With probability $1-\beta$, $\sum_{r \in R} \lambda_{H_r}(r, t_r) \le \sum_{r \in R} w_G(S_r) + O((n + \lg(1/\beta)) \lg^2(|R|) / \eps)$. 
    \end{claim}
    \begin{proof} We have that
        \begin{align*}
            \sum_{r \in R} \lambda_{H_r}(r, t_r) &\le \sum_{r \in R} w_{H_r}(W_r \cap S_r)\\ 
            &= \sum_{r \in R} \left( w_{G}(W_r \cap S_r) + |W_r \cap S_r \cap U| \cdot O((n + \lg(1/\beta)) \lg^2(|R|) / (\eps |U|)) \right)\\
            &\le \sum_{r \in R}  w_{G}(W_r \cap S_r) + O((n + \lg(1/\beta)) \lg^2(|R|) / \eps)\\
            &\le \sum_{r \in R} w_G(S_r) + O((n + \lg(1/\beta)) \lg^2(|R|) / \eps), 
        \end{align*} 
        where the last step is by \cref{cl:isolating-cut-claim2}. 
    \end{proof}
    Because of \cref{cl:isolating-cut-claim4} (and note that $|V(\cH)| = O(n)$), with high probability, the final cuts $\cC \cap W_r$ returned by the algorithm will have the property that 
    \begin{align*}
    \sum_{r \in R} w_{H_r}(\cC \cap W_r)
    &\le \sum_{r \in R} \lambda_{H_r}(r, t_r) + O((n + \lg(1/\beta)) \lg(|R|) / \eps) \\
    &\le \sum_{r \in R} w_G(S_r) + O((n + \lg(1/\beta)) \lg^2(|R|) / \eps). 
    \end{align*}
    Furthermore, $w_{G}(\cA \cap W_r) \le w_{H_r}(\cA \cap W_r)$ as we only add positive weights to $H_r$ compared to $G$, we further get 
    \[
    \sum_{r \in R} w_{G}(\cC \cap W_r) \le \sum_{r \in R} w_G(S_r) + O((n + \lg(1/\beta)) \lg^2(|R|) / \eps),
    \]
    which is the desired error bound. 

    \paragraph{Additional guarantee. } Finally, we need to show that if $|S_r \cap U| \le 0.5 U$ for some $r \in R$, then with high probability, the returned isolating cut by the algorithm $\cC \cap W_r$ has $|\cC \cap W_r \cap U| \le 0.9 U$. By \cref{cor:min-S-T-cut}, 
    \[
    w_{\cH}(\cC) \le w_{\cH}((\cC \setminus W_r) \cup (S_r \cap W_r)) +  O((n + \lg(1/\beta)) (\lg(|R|)) / \eps),
    \]
    with high probability. 
    By removing cut values contributed by $H_{r'}$ for $r' \ne r$ from both sides, we get that 
    \[
    w_{H_r}(\cC \cap W_r) \le w_{H_r}(S_r \cap W_r) +O((n + \lg(1/\beta)) (\lg(|R|)) / \eps). 
    \]
    Rewriting the cut values in terms of the edge weights of $G$ instead of $H_r$, the above becomes
    \begin{align*}
    & w(\cC \cap W_r) + (\STCutConstant{} \cdot (n + \lg(1/\beta)) \lg^2(|R|) / (\eps |U|)) |\cC \cap W_r \cap U| \\
    & \le w(S_r \cap W_r) + (\STCutConstant{} \cdot (n + \lg(1/\beta)) \lg^2(|R|) / (\eps |U|)) |S_r \cap W_r \cap U| +O((n + \lg(1/\beta)) (\lg(|R|)) / \eps) \\
    & \le w(S_r) + (\STCutConstant{} \cdot (n + \lg(1/\beta)) \lg^2(|R|) / (\eps |U|)) |S_r \cap W_r \cap U| +O((n + \lg(1/\beta)) (\lg^2(|R|)) / \eps). \tag{By \cref{cl:isolating-cut-claim2}}
    \end{align*}
    Because $w(\cC \cap W_r) \ge w(S_r)$, the above implies 
    \begin{align*}
    |\cC \cap W_r \cap U|& \le \frac{O((n + \lg(1/\beta)) (\lg^2(|R|)) / \eps)}{\STCutConstant{} \cdot (n + \lg(1/\beta)) \lg^2(|R|) / (\eps |U|)} + |S_r \cap W_r \cap U| \\
    &\le 0.4 |U| + |S_r \cap W_r \cap U| \tag{By setting $\STCutConstant{}$ large enough}\\
    &\le 0.4 |U| + |S_r \cap U| \\ 
    & \le 0.9 |U|, 
    \end{align*}
    as desired. 
\end{proof}





\section{Core Recursive Step}
\label{sec:gh-tree-step}

\begin{algorithm}[ht]
\centering
\caption{\ghstepalgo{}$(G=(V, E, w), s, U, \novel{\eps, \beta})$}
\label{algo:ghtreestep}
\begin{algorithmic}[1]
\State \novel{$\erriso \gets O\paren{\frac{(n + \lg(1/\beta))\lg^3(|U|)}{\eps}}$ and $\errvalues \gets O\paren{\frac{|U|\lg(|U|/\beta)}{\eps}}$}
\State \novel{$\hat\lambda(s, v) \gets \lambda(s,v) + \Lap\paren{\frac{4(|U|-1)}{\eps}}$ for all $v \in U \setminus \{s\}$}
\State Initialize $R^0 \gets U$ and $D \gets \emptyset$
\For{$i$ from $0$ to $\floor{\lg|U|}$}
    \State Call \novel{\isocutsalgo{}$\paren{G, R^i, \frac{\eps}{2(\floor{\lg U} + 1)}, \frac{\beta}{\floor{\lg U} + 1}}$ (\cref{algo:isolating})} obtaining disjoint sets $\hatS_v^i$
    \State \novel{$\hatw(\hatS_v^i) \gets w(\hatS_v^i) + \Lap\paren{\frac{8(\floor{\lg U} + 1)}{\eps}}$ for each $v \in R^i \setminus \{s\}$}
    \State Let $D^i \subseteq U$ be the union of $\hatS_v^i \cap U$ over all $v \in R^i \setminus \{s\}$ satisfying \novel{$\hatw(\hatS_v^i) \leq \hat\lambda(s,v) + \paren{2(\floor{\lg |U|} - i) + 1} \erriso + \errvalues$} and $|\hatS_v^i \cap U| \leq (9/10)|U|$ \label{algoline:cut-condition} 
    \State $R^{i+1} \gets $ sample of $U$ where each vertex in $U \setminus \{s\}$ is sampled independently with probability $2^{-i+1}$, and $s$ is sampled with probability $1$
\EndFor 
\State \Return the largest set $D^i$, the corresponding terminals $v \in R^i \setminus \{s\}$ and sets $\hatS_v^i$ satisfying the conditions on \cref{algoline:cut-condition}
\end{algorithmic}
\end{algorithm}

We now describe a key subroutine, outlined as \cref{algo:ghtreestep}, used to compute a DP Gomory-Hu tree. 
The high-level goal is to use Min Isolating Cuts to find minimum cuts that cover a large fraction of vertices in the graph. The overall structure of this algorithm follows that of the prior work~\cite{li2021preconditioning} with several key changes to handle additive approximations and privacy.
The inputs to \cref{algo:ghtreestep} are the weighted graph, a source vertex $s$, a set of active vertices $U \subseteq V$, a privacy parameter $\eps$, and a failure probability $\beta$. 
The steps that differ meaningfully from the non-private version developed in~\cite{li2021preconditioning} are \novel{in color}. 
To obtain a DP version of this method, \cref{algo:ghtreestep} invokes DP \minstcut and DP Min Isolating Cuts algorithm; the latter primitive is developed in this work in \cref{sec:iso-cuts}. 
In the original non-private algorithm, isolating cuts $S_v^i$ are included in $D^i$ if the set $S_v^i$ corresponds to the $v$ side of a Min-$s$-$v$-Cut, i.e., $w(S^i_v) = \lambda(s, v)$. 
The analysis in prior work relies on this equality, i.e., on $w(S^i_v)$ and $\lambda(s, v)$ being the same, in a crucial way.
Informally speaking, it enables the selection of many Min Isolating Cuts of the right size.
In our case, since the cuts and their values are released privately by random perturbations, it is unclear how to test that condition with equality. On the other hand, we still would like to ensure that many isolating cuts have ``the right'' size.
Among our key technical contributions is relaxing that condition by using a condition which {\bf changes} from iteration to iteration of the for-loop.
The actual condition we use is
\begin{equation}\label{eq:good-iso-set}
    \hatw(\hatS_v^i) \leq \hat\lambda(s,v) + \paren{2(\floor{\lg U} - i) + 1}\erriso + \errvalues
\end{equation}
on \cref{algoline:cut-condition} of \cref{algo:ghtreestep} where $\erriso$ and $\errvalues$ are upper bounds on the additive errors of the approximate Min Isolating Cuts and the approximate Min-$s$-$v$-Cut values, respectively.

When using \cref{eq:good-iso-set}, we also have to ensure that significant progress can still be made, i.e., to ensure that both (a) we will find a large set $D^i$ which is the union of approximate Min Isolating Cuts $\hatS_v^i$ satisfying the condition above and (b) none of the individual $\hatS$ which we return are too large as we will recurse within each of these sets. A new analysis uses this changing inequality to show that the former is true. For the latter, we utilize the special property of our \isocutsalgo{} in \cref{sec:iso-cuts} which forces an approximate isolating cut to contain at most $0.9 |U|$ terminals if there exists an exact isolating cut of size at most $|U| /2$.
We now turn to the analysis.

\subsection{Correctness}

As in prior work~\cite{li2021preconditioning, abboud2022breakingcubic}, let $D^* \subseteq U \setminus \{s\}$ be the set of vertices $v$ such that if $S_v^*$ is the $v$ side of the Min-$s$-$v$-Cut, $|S_v^* \cap U| \leq |U|/2$.

\begin{lemma}\label{lem:gh-step-size}
\ghstepalgo{}$(G, U, s, \eps, \beta)$ (\cref{algo:ghtreestep}) has the following properties:
\begin{itemize}
    \item Let $\erriso = C_1 (n + \lg(1/\beta))\lg^3(|U|)/\eps$ and $\errvalues = C_2|U|\lg(|U|/\beta)/\eps$ for large enough constants $C_1, C_2$. 
    Let $\{S_v^i\}_{v \in R^i}$ be optimal Min Isolating Cuts for terminals $R^i$.
    Then, with probability at least $1-O(\beta)$, the sets $\{\hatS_v^i: v \in R^*\}$ returned by the algorithm are approximate Min Isolating Cuts and approximate Min-$v$-$s$-Cuts:
    \[
    \sum_{v \in R^*} w(\hatS_v^i) - w(S_v^i)
    \leq \erriso,
    \]
    and, for all $v \in R^*$,
    \[
    w(\hatS_v^i) - \lambda(s,v)
    \leq 2(\floor{\lg U} + 1) \erriso + 2\errvalues.
    \]
    \item With probability at least $1/2$, $D^i$ returned by the algorithm satisfies
    \[
        |D^i| = \Omega\left(\frac{|D^*|}{\lg|U|}\right).
    \]
\end{itemize}
\end{lemma}

To prove this, we will need the following helpful definition and lemma.
Let $X_v^i$ be a random variable for the number of vertices in $U$ added to $D^i$ by a set $\hatS_v^i$:
\begin{equation}
    X_v^i = 
    \begin{cases}
        |\hatS_v^i \cap U| \qquad &\text{ if }  v \in R^i \text{ and } |\hatS_v^i \cap U| \leq (9/10)|U| \text{ and } \\
        &\quad \hatw(\hatS_v^i) \leq \hat\lambda(s,v) + \paren{2(\floor{\lg |U|} - i) + 1} \erriso + \errvalues\\
        0 \qquad &\text{o.w. }
    \end{cases}.
\end{equation}

\begin{lemma}\label{lem:Xvi}
Consider a vertex $v \in D^*$. Let $S_v^i$ be the $v$ part of an optimal solution to Min Isolating Cuts at stage $i$. Assume that $|\lambda(s,v) - \hat\lambda(s,v)| \leq \errvalues$ and $|w(\hatS_v^i) - \hatw(\hatS_v^i)| + |w(S_v^i) - w(\hatS_v^i)| \leq \erriso$ for all $i \in \{0, \ldots \floor{\lg |U|}\}$.
Then, there exists an $i' \in \{0, \ldots \floor{\lg |U|}\}$ such that
\[
\E{X_v^{i'}} = O(1).
\]
\end{lemma}

\begin{proof}
Consider a specific sampling level $i \in \{0, \ldots, \lfloor \lg |U| \rfloor\}$. We say that $i$ is ``active'' if there exists a set $\tilde{S}_v^i \subset U$ containing $v$ and not $s$ such that $|\tilde{S}_v^i \cap U| \in [2^i, 2^{i+1})$ and 
\begin{equation}\label{eq:tildeS}
    w(\tilde{S}_v^i) \leq \hat\lambda(s, v) + 2(\floor{\lg(|U|)} - i) \erriso + \errvalues.
\end{equation}
Note that this is a deterministic property regarding the existence of such a set $\tilde{S}_v^i$ independent of the randomness used to sample terminals or find private Min Isolating Cuts.

Let $i'$ be the smallest active $i$. 
Let $S^*_v$ be the $v$ side of the true min $s\text{-}v$ cut, and let $i^* = \lfloor \lg |S^*_v \cap U| \rfloor$. As $w(S^*_v) = \lambda(s,v) \leq \hat\lambda(s,v) + \errvalues$, $i^*$ must be active, so $i'$ is well-defined and $i' \leq i^*$.
As $i'$ is active, there exists a set $\tilde{S}_v^{i'}$ with $|\tilde{S}_v^{i'} \cap U| \in [2^{i'}, 2^{i'+1})$ and with cost within $2(\floor{\lg(|U|)} - i')\erriso + \errvalues$ of $\hat\lambda(s,v)$. On the other hand, as $i'$ is the \emph{smallest} active level, there is no set of size less than $2^{i'}$ with cost within $2(\floor{\lg(|U|)} - (i'-1)) \erriso + \errvalues$ of $\hat\lambda(s,v)$.

Consider the event that in $R^{i'}$, we sample $v$ but no other vertices in $\tilde{S}_v^{i'}$ as terminals, i.e., $R^{i'} \cap \tilde{S}_v^{i'} = \{v\}$. Then, $\tilde{S}_v^{i'}$ would be a valid output of a call to isolating cuts. By the assumed guarantee of the error of the private Min Isolating Cuts algorithm, the actual cut we output has approximated cost:
\begin{align*}
    \hatw(\hatS_v^{i'})
    &\leq w(\tilde{S}_v^{i'}) + |w(\tilde{S}_v^{i'}) - \hatw(\hatS_v^{i'})| \\
    &\leq w(\tilde{S}_v^{i'}) + |w(\tilde{S}_v^{i'}) - w(\hatS_v^{i'})| + |w(\hatS_v^{i'}) - \hatw(\hatS_v^{i'})|  \\
    &\leq w(\tilde{S}_v^{i'}) + \erriso \\
    &\leq \hat\lambda(s,v) + \paren{2(\floor{\lg(|U|)} - i') + 1} \erriso + \errvalues.
\end{align*}

For sake of contradiction, consider the case that $|\hatS_v^{i'} \cap U| < 2^{i'}$. Using the fact that all solutions of this size have large cost, we can conclude that
\begin{align*}
    \hatw(\hatS_v^{i'}) 
    &\geq w(\hatS_v^{i'}) - \erriso \\
    &> \hat\lambda(s,v) + 2(\floor{\lg(|U|)} - (i'-1))\erriso + \errvalues - \erriso \\
    &> \hat\lambda(s,v) + \paren{2(\floor{\lg(|U|)} - i')) + 1}\erriso + \errvalues.
\end{align*}
This contradicts the previous inequality that shows that $\hatw(\hatS_v^{i'})$ is upper bounded by this quantity, so $|\hatS_v^{i'} \cap U| \geq 2^{i'}$ as long as the sampling event occurs.

Next, we show that $|\hatS_v^{i'} \cap U| \leq (9/10)|U|$.
As $v \in D^*$, the true minimum cut $S^*_v$ has the property $|S^*_v \cap U| \leq |U|/2$.
Furthermore, by the isolating cuts lemma of~\cite{DBLP:conf/focs/LiP20}, the minimum isolating cut solution for any set of terminals including $v$ and $s$ will have that the $v$ part $S_v$ is a subset of $S^*_v$ (this is the basis for the isolating cuts algorithm).
So, if $v$ is sampled in $R^i$, there will exist an optimal isolating cuts solution $S_v^i$ with $|S_v^i \cap U| \leq |U|/2$.
By the guarantee of \cref{lemma:private-isolating-cuts}, $|\hatS_v^{i'} \cap U| \leq (9/10)|U|$.

Overall, we can bound the contribution of $\hatS_v^{i'}$ to $D^{i'}$ as
\begin{align*}
    \E{X_v^{i'}}
    &\geq |S_v^{i'} \cap U| \cdot \prob{\text{$v$ is the only vertex sampled in }\tilde{S}_v^{i'} \text{ under sampling probability }2^{-i'}} \\
    &\geq 2^{i'} \left(2^{-i'}\right) \left(1 - 2^{-i'}\right)^{|\tilde{S}_v^{i'}| - 1} \\
    &\geq \left(1 - 2^{-i'}\right)^{2^{i'+1} - 2}.
\end{align*}
If $i' = 0$, this evaluates to $1$. Otherwise, if $i' \ge 1$,
\begin{equation*}
    \E{X_v^{i'}}
    \geq \left(1 - 2^{-i'}\right)^{2^{i'+1}} 
    = \paren{\frac{1}{1 + \frac{2^{-i'}}{1-2^{-i'}}}}^{2^{i'+1}} 
    \geq \paren{\frac{1}{e^{\frac{2^{-i'}}{1-2^{-i'}}}}}^{2^{i'+1}}
    = e^{-\frac{2}{1-2^{-i'}}}
    \geq e^{-4}.
\end{equation*}
\end{proof}

We are now ready to prove the main lemma of this section.
\begin{proof}[Proof of \cref{lem:gh-step-size}]
The first step of the proof will be to show that $\erriso$ and $\errvalues$ upper bound the error of the approximate isolating cuts and min cut values used in the algorithm with probability $1-O(\beta)$.
Applying the guarantee of \cref{lemma:private-isolating-cuts} and union bounding over all $i$, with probability $1-\beta$, we get the following guarantee of the quality of $\hatS_v^i$. If $\{S_v^i\}$ are optimal Min Isolating Cuts for terminals $R^i$:
\begin{equation*}
    \sum_{v \in R^i} w(\hatS_v^i) - w(S_v^i) \leq O\paren{\frac{(n + \lg(1/\beta)) \lg^3(|R^i|)}{\eps}}.
\end{equation*}
Furthermore, via the tail of the Laplace distribution and a union bound over all $i$, each of the approximated weights satisfies the following inequality with probability $1 - \beta$:
\begin{equation*}
   |\hatw(\hatS_v^i) -  w(\hatS_v^i)| \leq O\paren{\frac{\lg|U|(\lg(|U|/\beta))}{\eps}}.
\end{equation*}
Therefore, there exists a choice of $\erriso = O\paren{\frac{(n + \lg(1/\beta)) \lg^3(n)}{\eps}}$ such that with probability $1 - O(\beta)$, for all $i \in \{0, \ldots, \floor{\lg |U|}\}$,
\begin{equation*}
    \sum_{v \in R^i} |w(S_v^i) - w(\hatS_v^i)| + |w(\hatS_v^i) - \hatw(\hatS_v^i)| \leq  \erriso.
\end{equation*}
This satisfies the approximate Min Isolating Cuts guarantee of the lemma.

For the approximate minimum cut values, by the tail of the Laplace distribution and a union bound, each $\hat\lambda(s,v)$ satisfies
\begin{equation*}
    |\lambda(s,v) - \hat\lambda(s,v)| \leq \errvalues = O\paren{\frac{|U|\lg(|U|/\beta)}{\eps}}
\end{equation*}
with probability $1-\beta$.
We will condition on these events going forward.

Sets $\hatS_v^i$ are only included in our output if they are close to the min cut value $\lambda(s,v)$. Specifically, for any set returned by our algorithm:
\begin{equation*}
     \hatw(\hatS_v^i)
     \leq \hat\lambda(s,v) + \paren{2(\floor{\lg |U|} - i) + 1} \erriso + \errvalues.
\end{equation*}
Applying the error guarantees for $\hatw(\hatS_v^i)$, $\hatS_v^i$, and $\hat\lambda(s,v)$,
\begin{align*}
     w(\hatS_v^i) 
     &\leq  \erriso + \hatw(\hatS_v^i) \\
     &\leq \erriso + \hat\lambda(s,v) + \paren{2(\floor{\lg |U|} - i) + 1} \erriso + \errvalues \\
     &\leq \erriso + (\lambda(s,v) + \errvalues) + \paren{2(\floor{\lg |U|} - i) + 1} \erriso + \errvalues \\
     &\leq \lambda(s,v) +  2(\floor{\lg |U|} + 1) \erriso + 2\errvalues.
\end{align*}
This completes the first part of the proof concerning the error of the returned sets. In the remainder, we focus on the cardinality of the output.

By definition of $X_v^i$, the size of $D^i$ is given by a sum over $X_v^i$:
\begin{equation*}
    |D^i| = \sum_{v \in U \setminus \{s\}} X_v^i
\end{equation*}
By linearity of expectation and as $D^* \subseteq U \setminus \{s\}$, 
\begin{equation*}
    \E{\sum_{i=0}^{\lfloor \lg|U|\rfloor} |D^i| } \geq \sum_{i=0}^{\lfloor \lg|U|\rfloor} \sum_{v \in D^*}\E{X_v^i}.
\end{equation*}
As we output the largest $D^i$ across all $i$, the output of our algorithm will have expected size at least
\begin{equation*}\label{eq:expect-di}
    \frac{1}{\lfloor \lg|U|\rfloor + 1} \sum_{i=0}^{\lfloor \lg|U|\rfloor} \sum_{v \in D^*} \E{X_v^i}.
\end{equation*}
Via \cref{lem:Xvi} (note that the error condition holds with probability $1-O(\beta)$ from the first part of this proof), the expected output size will be at least
\begin{equation*}
    \Omega\paren{\frac{|D^*|}{\lg |U|}}.
\end{equation*}
The final result follows from Markov's inequality.
\end{proof}

\subsection{Privacy}
\begin{lemma}\label{lem:ghtreestep-privacy}
\ghstepalgo{} (\cref{algo:ghtreestep}) is $\eps$-DP.
\end{lemma}

\begin{proof}
The algorithm \ghstepalgo{} interacts with the sensitive edges only through calculations of approximate min cut values $\hat\lambda(s,v)$, approximate isolating cut values $\hatw(\hatS_v^i)$, and calls to \isocutsalgo{} (\cref{algo:isolating}). Otherwise, all computation only deals with the vertices of the graph which are public.
The calculation of each of the $|U|-1$ cut values is $\frac{\eps}{4(|U|-1)}$-DP via the Laplace mechanism (\cref{thm:laplace}).
Via the privacy of \isocutsalgo{}, each call to that subroutine is $\frac{\eps}{2(\floor{\lg |U|} + 1)}$-DP.
At any sampling level $i$, the approximate isolating cuts $\hatS_v^i$ are disjoint.
So, calculation of $w(\hatS_v^i)$ for all $v \in R^i \setminus \{s\}$ has sensitivity $2$ as each edge can cross at most two sets.
By the Laplace mechanism, calculation of all $\hatw(\hatS_v^i)$ for any $i$ is $\frac{\eps}{4(\floor{\lg |U|} + 1)}$-DP.
Summing over all sampling levels, the total privacy via basic composition (\cref{thm:basic_comp}) is:
\begin{equation*}
    (|U|-1)\frac{\eps}{4(|U|-1)} + (\floor{\lg |U|} + 1) \paren{\frac{\eps}{{2(\floor{\lg |U|} + 1)}} + \frac{\eps}{4(\floor{\lg |U|} + 1)}} 
    = \frac{\eps}{4} + \frac{\eps}{2} + \frac{\eps}{4}
    = \eps.
\end{equation*}
\end{proof}

\section{Final Algorithm}
\label{sec:final-algo}
\begin{algorithm}[ht]
\centering
\caption{\finalalgo{}$(G=(V, E, w), \eps)$}
\label{algo:finalalgo}
\begin{algorithmic}[1]
\State $(T, f) \gets \novel{\ghalgo{}(G, V, \eps/2, 0, n)}$
\State \novel{Add $\Lap\paren{\frac{2(n-1)}{\eps}}$ noise to each edge in $T$}
\State \Return $T$
\end{algorithmic}
\end{algorithm}

\begin{algorithm}[ht]
\centering
\caption{\ghalgo{}$(G=(V, E, w), U, \novel{\eps, t, n_{\max}})$}
\label{algo:ghsteiner}
\begin{algorithmic}[1]
\State \novel{$t_{\max} \gets \Omega(\lg^2 n_{\max})$}
\If{\novel{$t > t_{\max}$}}
    \State \Return \novel{\textbf{abort}} \Comment{the privacy budget is exhausted}
\EndIf
\State $s \gets$ uniformly random vertex in $U$
\State Call \novel{\ghstepalgo{}$(G, s, U, \frac{\eps}{4t_{\max}}, \frac{1}{n_{\max}^3})$} to obtain $D, R \subseteq U$  and disjoint sets $\hatS_v$ (where $D = \bigcup_{v \in R} \hatS_v \cap U$) 
\For{each $v \in R$}\label{algoline:gh-forloop}
    \State Let $G_v$ be the graph with vertices $V \setminus \hatS_v$ contracted to a single vertex $x_v$ 
    \State \label{line:algo:combine:noisy-edge}\novel{Add edges with weight $\Lap\paren{\frac{8t_{\max}}{\eps}}$ from $x_v$ to every other vertex in $G_v$, truncating resulting edge weights to be at least $0$} 
    \State $U_v \gets \hatS_v \cap U$
    \State If $|U_v| > 1$, recursively set $(T_v, f_v) \gets \novel{\ghalgo{}(G_v, U_v, \eps, t+1, n_{\max})}$
\EndFor
\State Let $G_\l$ be the graph $G$ with (disjoint) vertex sets $\hatS_v$ contracted to single vertices $y_v$ for all $v \in D$
\State $U_\l \gets U \setminus D$
\State If $|U_\l| > 1$, recursively set $(T_\l, f_\l) \gets \novel{\ghalgo{}(G_\l, U_\l, \eps, t+1, n_{\max})}$
\State \Return \combinealgo{}$((T_\l, f_\l), \{(T_v, f_v) : v \in R\}, \{w(\hatS_v) : v \in R\})$ \Comment{The weights used in this step are not private}
\end{algorithmic}
\end{algorithm}

\begin{algorithm}[ht]
\centering
\caption{\combinealgo{}$((T_\l, f_\l), \{(T_v, f_v) : v \in R\}, \{w(\hatS_v): v \in R)\}$ }
\label{algo:combine}
\begin{algorithmic}[1]
\State Construct $T$ by starting with the disjoint union $T_\l \cup \bigcup_{v \in R} T_v$ and for each $v \in R$, adding an edge between $f_v(x_v) \in U_v$ and $f_\l(y_v) \in U_\l$ with weight $w(\hatS_v)$
\State Construct $f : V \to U = U_\l \cup \bigcup_{v \in R} U_v$ by $f(v') = f_\l(v')$ if $v' \in V \setminus \bigcup_{v \in R} \hatS_v$ and $f(v') = f_v(v')$ if $v' \in \hatS_v$ for some $v \in R$ 
\end{algorithmic}
\end{algorithm}

In this section, we present the algorithm \finalalgo{} (\Cref{algo:finalalgo}) for constructing an $\eps$-DP approximate Gomory-Hu tree and analyze its approximation error and privacy guarantees. 
The steps that differ meaningfully from the non-private version developed in~\cite{li2021preconditioning} are \novel{in color}. 
As in~\cite{li2021preconditioning}, we construct the slightly more general structure of an~\emph{Gomory-Hu Steiner tree} as an intermediate step in~\cref{algo:ghsteiner}.

\begin{definition}
    Let $G=(V,E,w)$ be a weighted graph and $U$ a set of terminals. A $\Gamma$-approximate Gomory-Hu Steiner tree is a weighted spanning tree $T$ on $U$ with a function $f:V\to U$ such that $f|_U$ is the identity and 
    \begin{itemize}
        \item for all distinct $s,t\in U$, if $(u,v)$ is the minimum weight edge on the unique path between $s$ and $t$, in $T$, and if $U'$ is the connected component of $T\setminus\{(u,v)\}$ containing $s$, then $f^{-1}(U')$ is a $\Gamma$-approximate Min-$s$-$t$-Cut with $\lambda_G(s,t)\leq w_T(u,v) = w_G(f^{-1}(U')) \leq \lambda_G(s,t)+\Gamma$. 
    \end{itemize} 
\end{definition}

To construct the final approximate Gomory-Hu tree, we make a call to $\ghalgo{}$ (\cref{algo:ghsteiner}) with $U=V$, the entire vertex set. 
The algorithm $\ghalgo{}$ is a private version of the GHTree algorithm in~\cite{li2021preconditioning}. It computes several (approximate) min cuts from a randomly sampled vertex $s\in U$ by making a call to  \ghstepalgo{} (\Cref{algo:ghtreestep}) to obtain $D, R \subseteq U$ and disjoint sets $\hatS_v$ (where $D = \bigcup_{v \in R} \hatS_v \cap U$). 
For each of these cuts $\hatS_v$ it constructs recursive sub-instances $(G_v,U_v)$ where $G_v$ is obtained by contracting $V\setminus \hatS_v$ to a single vertex $x_v$ and $U_v\leftarrow \hatS_v\cap U$. 
Moreover, it creates a sub-instance $(G_\l,U_\l)$ by contracting each of $\hatS_v$ to a single vertex $y_v$ for $y\in D$ and setting $U_\l=U\leftarrow D$.

Notably, on~\Cref{line:algo:combine:noisy-edge}, where the algorithm recurses on the graph $G_v$ with $V\setminus \hatS_v$ contracted to a single vertex $x_v$, we add noisy edges from $x_v$ to all other vertices of the graph. This ensures the privacy of any actual edge from $x_v$ in the entire recursive subtree of that instance without incurring too much error. This will imply that for any edge and any instance during the recursion, there is at most one sub-instance where the edge does not receive this privacy guarantee. If $t$ is the depth of the recursion tree, this allows us to apply basic composition over only $O(t)$ computations of the algorithm. Essentially, there is only one path down the recursion tree on which we need to track privacy for any given edge in the original graph. We enforce $t < t_{\max}$, and as we will show, the algorithm successfully terminates with depth less than $t_{\max}$ with high probability.

To combine the solutions to the recursive sub-problems, we use the \combinealgo{} algorithm (\Cref{algo:combine}) from~\cite{li2021preconditioning}, which in turn is similar to the original Gomory-Hu combine step except that it combines more than two recursive sub-instances.

Finally,~\cref{algo:finalalgo}  calls~\cref{algo:ghsteiner} with privacy budget $\eps/2$. To be able to output weights of the tree edges, it simply adds Laplace noise $\Lap(\frac{1}{2(n-1)\eps})$ to each of them, hence incurring error $O(\frac{n\lg n}{\eps})$ with high probability. This also has privacy loss $\eps/2$ by basic composition, so the full algorithm is $\eps$-differentially private.

\subsection{Correctness}
In this section, we analyze the approximation guarantee of our algorithm. The main lemma states that \cref{algo:ghsteiner} outputs an $O(n\polylog(n))$-approximate Gomory-Hu Steiner tree. 

\begin{lemma}\label{lem:approx-steiner}
    Let $t_{\max}=C\lg^2 n$ for a sufficiently large constant $C$. \ghalgo{}$(G, V, \eps, 0, n)$ outputs an $O(\frac{n\lg^8 n}{\eps})$-approximate Gomory-Hu Steiner tree $T$ of $G$ with high probability. 
\end{lemma}

We start by proving a lemma for analyzing a single recursive step of the algorithm. It is similar to  \cite[Lemma 4.5.4]{li2021preconditioning} but its proof requires a more careful application of the submodularity lemma.
\begin{lemma}\label{lemma:cuts-in-subproblems}
    With high probability, for any distinct vertices $p,q\in U_\l$, we have that $\lambda_G(p,q)\leq \lambda_{G_\l}(p,q)\leq \lambda_G(p,q)+O(\frac{n\lg^5 n}{\eps})$. 
    Also with high probability, for any $v \in R$ and distinct vertices $p, q \in U_v$, we have that  $\lambda_G(p,q)\leq \lambda_{G_{v}}(p,q)\leq \lambda_G(p,q)+O(\frac{n\lg^6 n}{\eps})$.
\end{lemma}
\begin{proof}
Let us start by upper bounding how close the cuts $\hatS_v$ are to being Min-$s$-$v$-Cuts and how close the approximate min cut values $\hatw (\hatS_v)$ are to the true sizes $w(\hatS_v)$. \cref{algo:ghsteiner} calls~\cref{algo:ghtreestep} with privacy parameter $\eps_1=\frac{\eps}{4t_{\max}}=\Theta\paren{\frac{\eps}{\lg^2 n}
}$.
By~\cref{lem:gh-step-size} with privacy $\eps_1$, it follows that (a) the sets $\{\hatS_v\}$ are approximate minimum isolating cuts with total error $O\paren{\frac{n \lg^5 n}{\eps}}$ and (b) each set $\hatS_v$ is an approximate Min-$s$-$v$-Cut with error $O\paren{\frac{n \lg^6 n}{\eps}}$.
On any given call to \cref{algo:ghtreestep}, these error bounds hold with probability $1 - n_{\max}^{-3}$ where $n_{\max}$ is the number of vertices in the original graph.
As each call to \cref{algo:ghtreestep} ultimately contributes an edge to the final Gomory-Hu tree via \cref{algo:combine}, there can be at most $n_{\max}-1$ calls throughout the entire recursion tree, resulting in failure probability of $n_{\max}^{-2}$ overall after a union bound.


The fact that $\lambda_G(p,q)\leq \lambda_{G_\l}(p,q)$ follows since $G_\l$ is a contraction of $G$.
To prove the second inequality, let $S$ be one side of the true Min-$p$-$q$-Cut. Let $R_1=R \cap S$ and $R_2=R\setminus S$. We show that the cut $S^*=(S\cup \bigcup_{v\in R_1} \hatS_v)\setminus (\bigcup_{v\in R_2} \hatS_v)$ is an $O(n\lg^6 n/\eps)$-approximate min cut. Since $S^*$ is also a cut in $G_\l$, the desired bound on $\lambda_{G_\l}(p,q)$ follows.

Let $v_1,\dots,v_{|R_1|}$ be the vertices of $R_1$ in an arbitrary order. By $|R_1|$ applications of the submodularity lemma (\cref{lem:submod}),
\begin{equation*}
    w \left(S\cup \bigcup_{v\in R_1} \hatS_v\right)\leq w (S)+\sum_{i=1}^{|R_1|}\left(w (\hatS_{v_i})-w\left(\left(S\cup \bigcup_{j<i}\hatS_{v_j}\right)\cap \hatS_{v_i}\right)\right).
\end{equation*}
Note that $S\cup \bigcup_{v\in R_1} \hatS_v$ is still a $(p,q)$-cut as $p,q\in U_\l$ and the sets $\hatS_{v_i}$ are each disjoint from $U_\l$. Moreover, for each $i$, $S$ contains $v_i$ and so $\left(S\cup \bigcup_{j<i}\hatS_{v_j}\right)\cap \hatS_{v_i}$ isolates $v_i$ from all vertices in $V \setminus \hatS_{v_i}$. The $\hatS_{v_i}$ are the outputs of the private isolating cuts algorithm (\Cref{algo:isolating}). Using the fact that $\{\hatS_v\}$ are approximate minimum isolating cuts, the sum in the RHS above can be upper bounded by $O(\frac{n\lg^5 n}{\eps})$. Letting $S'=S\cup \bigcup_{v\in R_1} \hatS_v$, and $S''=(V \setminus S') \cup \bigcup_{v\in R_2} \hatS_v$ a similar argument but applied to $V\setminus S'$, shows that
\begin{equation*}
  w (S'')=w \left((V\setminus S')\cup \bigcup_{v\in R_2} \hatS_v\right) \leq w (V\setminus S')+O\paren{\frac{n\lg^5 n}{\eps}}.
\end{equation*}
But $S^*=V\setminus S''$, so we get that 
\begin{equation*}
    w (S^*)=w (S'') \leq w (S')+O\paren{\frac{n\lg^5 n}{\eps}} \leq w (S)+ O\paren{\frac{n\lg^5 n}{\eps}},
\end{equation*}
as desired.

For the case of $p,q\in U_v$ for some $v$, again the bound $\lambda_G(p,q)\leq \lambda_{G_u}(p,q)$ is clear. Thus, it suffices to consider the upper bound on $\lambda_{G_v}(p,q)$. Let $S$ be the side of a Min-$p$-$q$-Cut in $G$ which does not contain $v$. Assume first that $s\notin S$.
By the submodularity lemma (\cref{lem:submod})
\begin{equation*}
    w (S\cap \hatS_v)\leq w (S)+w (\hatS_v)-w (S\cup \hatS_v).
\end{equation*}
By the approximation guarantees of \cref{algo:ghtreestep}, $w(\hatS_v) \leq \lambda_G(s,v) + O(\frac{n\lg^6 n}{\eps})$. Moreover, note that $S\cup \hatS_v$ is an $(s,v)$-cut of $G$, so $w(S\cup \hatS_v)\geq \lambda_G (s,v)$. Thus, 
\begin{equation*}
    w (S\cap \hatS_v)\leq w(S)+O\paren{\frac{n\lg^6 n}{\eps}}=\lambda_G(p,q)+O\paren{\frac{n\lg^6 n}{\eps}}.
\end{equation*}
Since $S\cap \hatS_v$ is a $(p,q)$-cut of $G_v$, we must have that $w(S\cap \hatS_v)\geq \lambda_{G_v}(p,q)$, so in conclusion $\lambda_{G_v}(p,q)\leq \lambda_G(p,q)+O(\frac{n\lg^6 n}{\eps})$ \emph{ignoring} the added noisy edges to $G_v$ in~\cref{line:algo:combine:noisy-edge} of~\cref{algo:ghsteiner}.

Adding the noisy edges can only increase the cost by $O(\frac{n\lg^3 n}{\eps})$ with high probability via Laplace tail bounds (note that there can be at most $n-1$ noisy edges added as each time a noisy edge is added an edge is added to the final approximate Gomory Hu Steiner tree). This finishes the proof in the case $s\notin S$. A similar argument handles the case where $s\in S$ but here we relate the value $w(V\setminus S)$ to $(w(V\setminus S)\cap \hatS_v)$.
\end{proof}

To bound the error of the algorithm, we need a further lemma bounding the depth of its recursion. The argument is similar to that of~\cite{li2021preconditioning}.
\begin{lemma}\label{lemma:recursion-depth}
If $t_{\max}=C\lg^2 n$ for a sufficiently large constant $C$, then, with high probability, no recursive call to \cref{algo:ghsteiner} from \finalalgo{}$(G, \eps)$ aborts.
\end{lemma}
\begin{proof}
Each of the recursive instances, $(G_v,U_v)$ has $|U_v|\leq \frac{9}{10}|U|$ by the way they are picked in~\cref{algoline:cut-condition} of~\cref{algo:ghtreestep}. Moreover, by a lemma from~\cite{abboud2020cut}, if $s$ is picked uniformly at random from $U$, then $\E{D^*}=\Omega(|U|-1)$. By~\cref{lem:gh-step-size}, the expected size of $\bigcup_{v \in R} \hatS_v$ returned by a call to \cref{algo:ghtreestep} when picking $s$ at random from $|U|$ is then at least $\Omega\paren{\frac{|U|-1}{\lg n}}$ with constant probability. It follows that, $\E{|U_\l|} \leq |U|(1-\Omega(1/\lg n ))$ when $|U|\geq 1$. By Markov's inequality and union bound, all sub-instances have $|U|=1$ within $O(\lg^2 n)$ recursive depth with high probability.
\end{proof}

We can now prove~\cref{lem:approx-steiner}. The argument is again similar to~\cite{li2021preconditioning} except we have to incorporate the approximation errors.
\begin{proof}[Proof of~\cref{lem:approx-steiner}] 
By~\cref{lemma:recursion-depth}, the algorithm does not abort with high probability. 

Let $\Delta=O(\frac{n\lg^6 n}{\eps})$ be such that with high probability $\lambda_{G_{v}}(p,q)\leq \lambda_G(p,q)+\Delta$ for $p,q\in U_v$ and similarly $ \lambda_{G_{large}}(p,q)\leq \lambda_G(p,q)+\Delta$ for $p,q\in U_{large}$. The existence of $\Delta$ is guaranteed by~\cref{lemma:cuts-in-subproblems}. We prove by induction on $i=0,\dots,t_{\max}$, that the output to the instances at level $t_{\max}-i$ of the recursion are $2i\Delta$-approximate Gomory-Hu Steiner trees. This holds trivially for $i=0$ as the instances on that level have $|U|=1$ and the tree is the trivial one-vertex tree approximating no cuts at all. 
Let $i\geq 1$ and assume inductively that the result holds for smaller $i$. In particular, if $(T, f)$ is the output of an instance at recursion level $i$, then the trees $(T_v,f_v)$ and $(T_\l,f_\l)$ are $2(i-1)\Delta$-approximate Gomory-Hu Steiner trees of their respective $G_v$ or $G_\l$ graphs.

Consider any internal edge $(a,b) \in T_\l$ (without loss of generality, what follows also holds for $(a,b) \in T_v)$.
Let $U'$ and $U'_\l$ be the connected component containing $a$ after removing $(a,b)$ from $T$ and $T_\l$, respectively. 
By design of \cref{algo:combine}, $f^{-1}_\l(U'_\l)$ and $f^{-1}(U')$ are the same except each contracted vertex $y_v \in f^{-1}_\l(U'_\l)$ appears as $\hatS_v \subseteq f^{-1}(U')$. It follows that $w_{G_\l}(f_\l^{-1}(U'_\l)) = w_G(f^{-1}(U'))$. By the inductive hypothesis, $(T_\l, f_\l)$ is an approximate Gomory-Hu Steiner tree, so $w_{G_\l}(f_\l^{-1}(U'_\l)) = w_{T_\l}(a,b)$. Therefore setting $w_T(a,b) = w_{T_\l}(a,b) = w_G(f^{-1}(U'))$ has the correct cost for $T$ according to the definition of an approximate Gomory-Hu Steiner tree. 

Furthermore, on the new edges $(f_v(x_v), f_\l(y_v))$, the weight $w(\hatS_v)$ is the correct weight for that edge in $T$ as $\hatS_v$ is the $f_v(x_v)$ side of the connected component after removing that edge. Finally, by adding these new edges, the resulting tree is a spanning tree. So, the structure of the tree is correct, and it remains to argue that the cuts induced by the tree (via minimum edges on shortest paths) are approximate Min-$s$-$t$-Cuts.

Consider any $p, q \in U$. Let $(a,b)$ be the minimum edge on the shortest path in $T$. Note that it is always the case that $w_T(a,b) \geq \lambda(a,b)$ as $w_T(a,b)$ corresponds to the  value of a cut in $G$ separating $a$ and $b$.
We will proceed by cases.

If $(a,b) \in T_\l$, then by induction, $w_T(a,b) = w_{T_\l}(a,b)\leq \lambda_{G_\l}(a,b)+(2i-2)\Delta$. By~\cref{lemma:cuts-in-subproblems}, it follows that $w_T(a,b) \leq \lambda_G(a,b)+(2i-1)\Delta$.
The exact same argument applies if $(a,b) \in T_v$ for some $v \in R$.

The case that remains is if $(a,b)$ is a new edge with $a =f_v^{-1}(x_v) \in U_v$ and $b=f_\l^{-1}(y_v) \in U_\l$. 
Then, $w_T(a,b) = w_G(\hatS_v)$.
By \cref{lem:gh-step-size}, $\hatS_v$ is an approximate Min-$v$-$s$-Cut and $w_T(a,b) \leq \lambda_G(v,s) + \Delta$.
To connect this value to $\lambda_G(a,b)$, note that over the choices of where $v$ and $s$ lie on the Min-$a$-$b$-Cut,
\begin{equation*}
   \lambda_G(a,b) \geq \min\brace{\lambda_G(a,v), \lambda_G(v,s), \lambda_G(s,b)}.
\end{equation*}
Let $S'_a$ be the $a$ side of the $(a,v)$ cut induced by the approximate Gomory-Hu Steiner tree $(T_v, f_v)$. As $a = f_v^{-1}(x_v)$ and $s \in x_v$, $S'_a$ is also the $s$ side of a $(v,s)$ cut. Therefore, $w_G(S'_a) \geq \lambda_G(v,s)$.
On the other hand, by our inductive hypothesis and \cref{lemma:cuts-in-subproblems}, this is an approximate Min-$a$-$v$-Cut: $w_G(S'_a) \leq \lambda_G(a, v) + (2i-1) \Delta$. The analogous argument holds to show $\lambda_G(v,s) \leq \lambda_G(s, b) + (2i-1)\Delta$.
Therefore,
\begin{equation*}
    \lambda_G(v,s) \leq \min\brace{\lambda_G(a, v), \lambda_G(s, b)} + (2i-1) \Delta
\end{equation*}
and
\begin{equation*}
    w_T(a,b) \leq \lambda_G(v,s) + \Delta \leq \lambda_G(a,b) + 2i\Delta.
\end{equation*}

In all cases, $w_T(a,b) \leq \lambda_G(a,b) + 2i\Delta$.
As the cut corresponding to the edge $(a,b)$ is on the path from $p$ to $q$, it is also a $(p, q)$-cut, so $\lambda_G(p,q) \leq w_G(a,b)$.
Furthermore, it must the case that there is an edge $(a', b')$ along the path between $p$ to $q$ such that $a'$ and $b'$ are in different sides of the true Min-$p$-$q$-Cut. Otherwise, $p$ and $q$ would be on the same side of the cut.
Therefore, $\lambda_G(p,q) \geq \lambda_G(a',b')$.
As we chose $(a,b)$ to be the minimum weight edge,
\begin{equation*}
    w_T(a,b) \leq w_T(a',b') \leq \lambda_G(a',b') + 2i\Delta \leq \lambda_G(p,q) + 2i\Delta.
\end{equation*}

This completes the induction. It follows that the call to $\ghalgo{}(G, V, \eps, 0)$ outputs a $2t_{\max} \Delta$-approximate Gomory-Hu Steiner tree $T$. Substituting in the values $t_{\max}=O(\lg^2 n)$ and $\Delta = O(\frac{n\lg^6 n}{\eps})$ gives the approximation guarantee.
\end{proof}

We now state our main result on the approximation guarantee of~\cref{algo:finalalgo}.
\begin{theorem}
Let $T=(V_T,E_T,w_T)$ be the weighted tree output by \finalalgo{}$(G=(V, E, w), \eps)$ on a weighted graph $G$.
For each edge $e\in E_T$, define $S_{e}$ to be the set of vertices of one of the connected components of $T\setminus \{e\}$.
Let $u,v\in V$ be distinct vertices and let $e_{\min}$ be an edge on the unique $u$-$v$ path in $T$ such that $w_T(e_{\min})$ is minimal. With high probability, $S_{e_{\min}}$ is an $O(\frac{n\lg^8 n}{\eps})$-approximate Min-$u$-$v$-Cut and moreover, $|\lambda_G(u,v)-w_T(e_{\min})|=O(\frac{n\lg^8 n}{\eps})$.
\end{theorem}
\begin{proof}
Note that for each edge $e$, $w_T(e)$ is obtained by adding noise $\Lap\paren{\frac{2(n-1)}{\eps}}$ to the cut value $w(S_e)$. Thus, $|w(S_e)-w_T(e)|=O(\frac{n\lg n}{\eps})$ with high probability for all $e\in T$. Now let $e_0$ be an edge on the unique $u$-$v$ path in $T$ such that $w(S_{e_0})$ is minimal. Then, by~\cref{lem:approx-steiner}, $w(S_{e_0})\leq \lambda_G(u,v)+O(\frac{n\lg^8 n}{\eps})$ with high probability. As $e_{\min}$ was chosen as an edge on the $u$-$v$ path in $T$ of minimal weight, $w_T(e_{\min})\leq w_T(e_0)$, and so
\begin{align*}
    w(S_{e_{\min}})&\leq 
    w_T(e_{\min})+O\paren{\frac{n\lg n}{\eps}} \leq w_T(e_0)+O\paren{\frac{n\lg n}{\eps}} \\ &\leq 
    w(S_{e_0})+O\paren{\frac{n\lg n}{\eps}} \leq \lambda_G(u,v)+O\paren{\frac{n\lg^8 n}{\eps}}.
\end{align*}
On the the other hand, $S$ defines a $(u,v)$-cut, so $\lambda_G(u,v)\leq w(S)$. This proves the first statement. Moreover, the string of inequalities above combined with $\lambda_G(u,v)\leq w(S)$ in particular entails that 
\begin{equation*}
   \lambda_G(u,v)\leq w_T(e_{\min})+O\paren{\frac{n\lg n}{\eps}} \leq \lambda_G(u,v)+O\paren{\frac{n\lg^8 n}{\eps}},
\end{equation*}
from which the final statement follows.
\end{proof}

\subsection{Privacy}

We will refer to any subroutine or algorithm as being $\eps$-DP$(G,G')$ if it satisfies the $\eps$-DP condition for a fixed pair of neighboring graphs $G, G'$. We will prove privacy by proving that $\eps$-DP$(G, G')$ holds for any $G, G'$.

\begin{theorem}
\finalalgo{}$(G, U, \eps)$ is $\eps$-DP.
\end{theorem}

\begin{proof}
Let $\sigma_{GH}(t)$ be the privacy of the topology of the tree released by \ghalgo{} on a graph with at most $n$ vertices, privacy parameter $\eps$, and at recursive depth $t$. Note that the edge weights of the tree are not private but no computation depends on these values.
Until the end of the analysis, we will ignore the edge weights on the tree in terms of privacy.
We want to show $\sigma_{GH}(0) \leq \eps$.

Consider two neighboring graphs $G$ and $G'$ which differ by $1$ on the weight on a single edge $e=(a,b)$.
Consider a call to \ghalgo{} with privacy parameter $\eps$ and recursion depth $t$.
Let $z_a, z_b$ be the vertices containing $a, b$ (possibly after contractions).
If $z_a = z_b$, the output of \ghalgo{} is $0$-DP$(G,G')$ as $e$ has been contracted and will have no effect on the output of the algorithm.
By \cref{lem:ghtreestep-privacy}, the call to \ghstepalgo{} is $\frac{\eps}{4t_{\max}}$-DP. Manipulation of the resulting vertex sets $\hatS_v$ does not hurt privacy via post-processing and as they do not depend on edge structure or weights. We will proceed by cases on where $z_a, z_b$ belong.

\paragraph{Case 1: $z_a, z_b \in \hatS_v$}
In this setting, $z_a, z_b$ belong to the same approximate min isolating cut. From the for loop on \cref{algoline:gh-forloop}, releasing any $(T_{v'}, f_{v'})$ for $v' \neq v$ is $0$-DP$(G,G')$ as $z_a, z_b$ will be contracted to the same vertex $x_{v'}$.
On the other hand, $z_a, z_b$ will exist as single vertices in $G_v$ and releasing $(T_{v'}, f_{v'})$ is $\sigma_{GH}(t+1)$-DP$(G,G')$.
Finally, releasing $(T_\l, f_\l)$ is $0$-DP$(G,G')$ as $z_a, z_b$ will be both contracted to $y_v$.
The \combinealgo{} algorithm is only doing post-processing and does not hurt privacy.
By basic composition (\cref{thm:basic_comp}), the overall privacy in this case is
\begin{equation*}
    \frac{\eps}{4t_{\max}} + \sigma_{GH}(t+1).
\end{equation*}

\paragraph{Case 2: $z_a \in \hatS_v, z_b \in \hatS_{v'}$}
Here, $z_a, z_b$ belong to separate approximate min isolating cuts $\hatS_v, \hatS_{v'}$.
For any $v'' \notin \{v, v'\}$, releasing $(T_{v''}, f_{v''})$ is $0$-DP$(G,G')$.
Consider releasing $(T_v, f_v)$ where $z_a$ will be included in $G_v$ and $z_b$ will be contracted to $x_v$. 
On \cref{line:algo:combine:noisy-edge}, we add $\Lap\paren{\frac{8t_{\max}}{\eps}}$ to each edge between $x_v$ and all vertices in $\hatS_v$, including to $e$, and as a post-processing step, we truncate the edge weights to zero. This ensures that releasing $(T_v, f_v)$ is $\frac{\eps}{8t_{\max}}$-DP$(G,G')$ via the Laplace mechanism (\cref{thm:laplace}). Any computation done in the resulting recursive branch does not hurt privacy via post-processing. The same argument applies without loss of generality to releasing $(T_{v'}, f_{v'})$.
Releasing $(T_\l, f_\l)$ involves a recursive call to a graph where $z_a, z_b$ are contracted to separate vertices $y_v, y_{v'}$, and is $\sigma_{GH}(t+1)$ private.
The overall privacy in this case is
\begin{equation*}
    \frac{\eps}{4t_{\max}} + 2\frac{\eps}{8t_{\max}} + \sigma_{GH}(t+1) = \frac{\eps}{2t_{\max}} + \sigma_{GH}(t+1).
\end{equation*}

\paragraph{Case 3: $z_a \in \hatS_v, z_b \in V \setminus \bigcup_{v \in R} \hatS_v$}
Here, $z_a$ belongs to an approximate min isolating cut and $z_b$ does not. Once again, for any $v' \neq v$, releasing $(T_{v'}, f_{v'})$ is $0$-DP$(G, G')$. By the same argument above via the Laplace mechanism, releasing $(T_v, f_v)$ is $\frac{\eps}{8t_{\max}}$-DP$(G,G')$. Finally, releasing $(T_\l, f_\l)$ is $\sigma_{GH}(t+1)$ private. The overall privacy, in this case, is
\begin{equation*}
    \frac{\eps}{4t_{\max}} + \frac{\eps}{8t_{\max}} + \sigma_{GH}(t+1) = \frac{3\eps}{8t_{\max}} + \sigma_{GH}(t+1)
\end{equation*}

\paragraph{Case 4: $z_a, z_b \in V \setminus \bigcup_{v \in R} \hatS_v$}
In this case, neither $z_a$ nor $z_b$ belong to an approximate min isolating cut. Therefore, releasing all $(T_{v'}, f_{v'})$ is $0$-DP$(G, G')$. Releasing $(T_\l, f_\l)$ is $\sigma_{GH}(t+1)$ private. The overall privacy, in this case, is
\begin{equation*}
    \frac{\eps}{4t_{\max}} + \sigma_{GH}(t+1)
\end{equation*}

Across all cases, the maximum privacy cost is incurred in Case 2. By basic composition, the privacy of the algorithm is bounded by the recurrence
\begin{equation*}
    \sigma_{GH}(t) \leq \frac{\eps}{2t_{\max}} + \sigma_{GH}(t+1).
\end{equation*}
As the recurrence ends at $t' < t_{\max}$,
\begin{equation*}
    \sigma_{GH}(0) \leq t_{\max} \paren{\frac{\eps}{2t_{\max}}} = \frac{\eps}{2}.
\end{equation*}

Therefore, releasing the unweighted tree from \finalalgo{} is $\frac{\eps}{2}$-DP. Each edge weight in the tree comes from calculating the weight of a given cut, which can change by at most $1$ between two neighboring graphs. 
As the tree contains $n-1$ edges, all tree weights can be released via the Laplace mechanism by adding $\Lap\paren{\frac{2(n-1)}{\eps}}$ noise to each edge weight, resulting in $\frac{\eps}{2}$-DP. Then, releasing the weighted tree is $\eps$-DP, as required.
\end{proof}

\subsection{Runtime}

While runtime is not our main focus, as a final note, our algorithm can be implemented to run in near-quadratic time in the number of vertices of the graph. The runtime is inherited directly from prior work of~\cite{abboud2022breakingcubic}, which utilizes the same recursive algorithm introduced in~\cite{li2021preconditioning}. 
The overall structure of their main algorithm and subroutines remains in our work with changes of the form (a) altering runtime-independent conditions in \textbf{if} statements or (b) adding noise to cut values or edges in the graph. While left unspecified here, computation of single source Min-$s$-$v$-Cuts in \cref{algo:ghtreestep} should be done via the runtime-optimized algorithm of prior work to achieve the best bound. Then, via Theorem 1.3 of~\cite{abboud2022breakingcubic}, \cref{algo:finalalgo} runs in time $\tilde O(n^2)$.

\bibliographystyle{alpha}
\bibliography{bib}

\appendix
\section{Proof of Corollary \ref{cor:minimumkcut}}\label{sec:proofminkcut}

We recall the statement.

\minkcut*

\begin{proof}
    We follow the proof of \cite{saran1995finding}\footnote{more precisely the lecture notes in \url{https://courses.grainger.illinois.edu/cs598csc/sp2009/lectures/lecture_7.pdf}}, but replace the `looking at' the exact GH-tree with our approximate version. The algorithm is simple: we cut the edges corresponding to the union of cuts given by the smallest $k-1$ edges of our approximate GH-tree $T$ of \cref{thm:main-DP-alg}. If this produces more than $k$ pieces, arbitrarily add back cut edges until we reach a $k$-cut.

    For the analysis, consider the optimal $k$-cut with partitions $V_1, \ldots, V_k$ and let $w(V_1) \le \ldots \le w(V_k)$ denote the weight of the edges leaving each partition without loss of generality. Since every edge in the optimum is adjacent to exactly two pieces of the partition, it follows that $\sum_i w(V_i)$ is \emph{twice} the cost of the optimal $k$-cut. We will now demonstrate $k-1$ different edges in $T$ which have cost at most $\sum_i w(V_i)$, up to additive error $O(k \Delta) = \tilde{O}(nk/\eps)$, where $\Delta = \tilde{O}(n/\eps)$ is the additive error from \cref{thm:main-DP-alg}. 

    As in the proof in \cite{saran1995finding}, contract the vertices in $V_i$ in $T$ for all $i$. This may create parallel edges, but the resulting graph is connected since $T$ was connected to begin with. Make this graph into a spanning tree by removing parallel edges arbitrarily, root this graph at $V_k$, and orient all edges towards $V_k$. 

    Consider an arbitrary $V_i$ where $i \ne k$. The `supernode' for $V_i$ has a unique edge leaving it, which corresponds to a cut between some vertex $v \in V_i$ with $w \not \in V_i$. Since $T$ is an approximate-GH tree, the weight of this edge must be upper bounded by $w(V_i)$ (which is also a valid cut separating $v$ and $w$), up to additive error $\Delta$. The proof now follows by summing across $V_i$.
\end{proof}
\end{document}